\def\cM{\mathcal{M}}
\def\cS{\mathcal{S}}
\def\cU{\mathcal{U}}
\def\spi{{\sqrt{\pi}}}
\def\Tr{\mathrm{Tr}}
\def\Pr{\mathrm{Pr}}
\newtheorem{definition}{Definition}
\newtheorem{theorem}{Theorem}
\newtheorem{lemma}{Lemma}
\newtheorem{corollary}{Corollary}
\def\one{{\mathchoice {\rm 1\mskip-4mu l} {\rm 1\mskip-4mu l} {\rm
1\mskip-4.5mu l} {\rm 1\mskip-5mu l}}}
\newcommand{\ket}[1]{\left| #1\right\rangle}        
\newcommand{\sket}[1]{| #1\rangle}  
\newcommand{\bra}[1]{\left\langle #1\right|}        
\begin{document}


\title{Quantum algorithms for Gibbs sampling and hitting-time estimation}

\author{
Anirban Narayan Chowdhury}
\affiliation{Center for Quantum Information and  Control, University of New Mexico, Albuquerque, NM 87131, US and New Mexico Consortium,
Los Alamos,
NM 87545, US.}

\author{
Rolando D. Somma}
\affiliation{Theoretical Division, Los Alamos National Laboratory, Los Alamos, NM 87545, US.}

\date{\today}

  \begin{abstract}
  We present quantum algorithms for solving two problems regarding stochastic processes.
  The first algorithm prepares the thermal Gibbs state of a quantum system
  and runs in time almost linear in $\sqrt{N \beta/{\cal Z}}$ and polynomial in $\log(1/\epsilon)$, 
  where $N$ is the Hilbert space dimension, $\beta$
  is the inverse temperature,  ${\cal Z}$ is the partition function, and $\epsilon$ is the desired
  precision of the output state. Our quantum algorithm exponentially improves the dependence
  on $1/\epsilon$ and quadratically improves the dependence on $\beta$ of known quantum algorithms
  for this problem. The second algorithm estimates the hitting time of a Markov chain. For a sparse stochastic matrix $P$, it runs in time
  almost linear in $1/(\epsilon \Delta^{3/2})$, where $\epsilon$ is the absolute precision in the estimation
  and $\Delta$ is a parameter determined by $P$, and whose inverse is an upper bound of the hitting time. 
  Our quantum algorithm quadratically improves the dependence on $1/\epsilon$ and $1/\Delta$
  of the analog classical algorithm for hitting-time estimation. Both algorithms use  tools recently developed
  in the context of Hamiltonian simulation, spectral gap amplification, and solving linear systems of equations.
  \end{abstract}
  
  \pacs{03.67.Ac, 89.70.Eg}
  
  \maketitle
  

\section{Introduction}
Two important problems in statistical mechanics  and stochastic processes
are sampling from the thermal or Gibbs distribution of a physical system at a certain temperature
and the estimation of hitting times of  classical Markov chains.
The first such problem has a wide range of applications
as it  allows us to compute quantities like the partition function,
energy, or entropy of the system, and understand its physical properties in thermal equilibrium~\cite{Hua63}.
This problem has also applications in many other scientific areas including optimization~\cite{Aga95}.
Hitting times are also paramount in the study of classical random processes
and they allow for a characterization of Markov chains~\cite{LPW}. Roughly, a hitting time
is the time required by a diffusive random walk to reach a particular configuration
with high probability. Besides their use in physics, hitting times are also important
in solving search problems where the goal is to find a marked configuration
of the Markov chain~\cite{RP13}.

In a classical setting, these two problems are commonly solved
using Monte-Carlo techniques~\cite{NB98}. Each step in a Monte-Carlo simulation
corresponds to applying a particular probability rule that determines a Markov chain and an associated
 stochastic matrix. In the case of sampling from Gibbs distributions, for example, the fixed
point of the Markov chain (i.e., the eigenvector of the stochastic matrix with eigenvalue 1) 
corresponds to the desired distribution. Such a distribution can
then  be prepared
by repeated
applications of the probability rule. To sample from probability distributions associated
with thermal Gibbs states of quantum systems, quantum Monte-Carlo
techniques may be used~\cite{And07}.
The running time of a Monte-Carlo simulation is typically dominated
by the number of times the probability rule is applied to prepare the desired distribution
with some given precision.
This running time depends on properties of the Markov chain such as 
the spectral gap of the stochastic matrix~\cite{LPW}.

In recent years, there has been significant interest in the 
development of quantum algorithms for simulating stochastic processes.
Quantum algorithms for thermal Gibbs state preparation were developed
in various works (c.f.,~\cite{SBB08,PW09,CW10,BB10,TOVPV11,STV12}) and showed to 
provide polynomial quantum speedups in terms of various parameters, such as the spectral
gap of the stochastic matrix or the dimension of the Hilbert space.
The notion of quantum hitting time was also introduced in numerous works (c.f.,~\cite{AKR05,Sze04,KB06,MNR07,KOR10}).
 Often, quantum hitting times of quantum walks on different graphs are significantly (e.g., polynomially) smaller than their classical counterparts. 
 There are also various quantum algorithms to accelerate classical Monte-Carlo methods for estimating different quantities,
 such as expected values or partition functions (c.f.,~\cite{KOS07,Mon15}).
 Our results  advance these areas further by providing new quantum algorithms with various improvements in the running time with respect to known
 classical and quantum algorithms for some of these problems.

In more detail, we present two quantum algorithms 
for preparing thermal Gibbs states of quantum systems
 and for estimating hitting times, respectively.
 The first algorithm runs in time $\tilde O(\sqrt{N \beta/{\cal Z}})$,
 where $N$ is the Hilbert space dimension, $\beta$ is the inverse temperature,
 and ${\cal Z}$ is the partition function of the quantum system. The $\tilde O$ notation
 hides polylogarithmic factors in these quantities and $1/\epsilon$, where $\epsilon$ is the desired
 precision of the output state. This is a quadratic improvement in $\beta$ and an exponential 
 improvement in $1/\epsilon$ with respect to a related algorithm presented in~\cite{PW09,CW10}.
 In fact, the main difference between our quantum algorithm and that of~\cite{PW09,CW10}
 is in the implementation of the operator $e^{-\beta H/2}$, where $H$ is the Hamiltonian of the system.
 Rather than using phase estimation, we use a technique introduced in~\cite{SOGKL02,BCC+14,CKS15} to decompose 
 $e^{-\beta H/2}$ as a linear combination of unitary operations and then apply results of spectral gap amplification in~\cite{SB13}
 to implement each such unitary. The same idea can be used to improve the running time of the algorithm presented in~\cite{BB10}.
 The second algorithm provides an estimate of $t_h$, the hitting time of a reversible, irreducible, and aperiodic Markov
 chain. It runs in time $\tilde O(1/(\epsilon \Delta^{3/2}))$, where $\epsilon$ is the absolute precision in the estimation
 and $\Delta$ is a parameter that satisfies $1/\Delta \ge t_h$. The $\tilde O$ notation hides factors 
 that are polynomial in $\log(1/(\epsilon \Delta))$ and $\log(N)$, where   
 $N$ is the dimension of the configuration
 space. In addition to the techniques used by the first algorithm,
 the second algorithm also uses recent methods for the quantum linear systems algorithm in~\cite{CKS15} and methods
 to estimate quantities at the so-called quantum metrology limit described in~\cite{KOS07}.

The paper is organized as follows. In Sec.~\ref{sec:maintools} we describe the main techniques
introduced in~\cite{SB13,BCC+14,BCC+15,CKS15,KOS07} that are also used by our algorithms. Then,
the quantum algorithm for the preparation of thermal Gibbs states of quantum systems
is described in Sec.~\ref{sec:Gibbs} and the quantum algorithm for estimating hitting times
of classical Markov chains is described in Sec.~\ref{sec:HT}.
We provide concluding remarks in Sec.~\ref{sec:conc}.

\section{Main techniques}
\label{sec:maintools}
Our algorithms are based on techniques 
developed in the context of spectral gap amplification~\cite{SB13}, Hamiltonian simulation~\cite{BCC+14,BCC+15},
 quantum metrology~\cite{KOS07}, and solving linear systems of equations~\cite{CKS15}.
We first consider an arbitrary finite-dimensional quantum system modeled by a Hamiltonian $H$ that satisfies
\begin{align}
H \ket{\psi_j} = E_j \ket{\psi_j} \;.
\end{align}
 $E_j$ are the eigenenergies and $\ket{\psi_j}$ are the eigenstates, $j=0,1,\ldots N-1$, and $N$ is the dimension of the Hilbert space.
 We assume that $H$ describes a system of $n$ qubits and $N=2^n$~\cite{OGKL01,SOGKL02}.
 Furthermore, we assume that $H$ can be decomposed as
\begin{align}
H = \sum_{k=1}^{K} h_k \;,
\end{align}
where each $h_k \ge 0$ is a semidefinite positive Hermitian operator.
In some cases, the assumption on $h_k$ can be satisfied
after a simple rescaling of $H$ depending on its specification.

The results in~\cite{SB13} use the Hamiltonian
\begin{align}
\label{eq:GAP}
\tilde H =  \sum_{k=1}^{K} \sqrt{h_k} \otimes \left( \ket k \! \bra 0_{\rm a_1} + \ket 0 \! \bra k_{\rm a_1}\right) \; ,
\end{align}
where $\rm a_1$ refers to an ancillary qubit register of dimension $O( \log (K))$.
The important property is
\begin{align}
\label{eq:mainpropH}
 ( \tilde H  )^2 \ket{\phi} \otimes \ket 0_{\rm a_1} = \left( H \ket{\phi} \right) \otimes \ket 0_{\rm a_1} \; ,
\end{align}
for any $\ket \phi$.
Roughly, $\tilde H$ can be thought of as the square root of $H$.
Our algorithms will require evolving with $\tilde H$ for arbitrary time:
\begin{definition}
Let $\tilde W(t) := \exp(-i \tilde H t)$ be the evolution operator of $\tilde H$ for time $t$, and $\epsilon >0$ a precision parameter. We define
$W$ as a quantum circuit that satisfies $\| \tilde W(t) - W \| \le \epsilon$. The number of two-qubit gates
to implement $W$ (i.e., the gate complexity) is $C_W(t,\epsilon)$.
\end{definition}

When $H$ is a physical Hamiltonian described by local operators, $\tilde H$ may be efficiently
obtained with some classical preprocessing.
To obtain $C_W(t,\epsilon)$ in some instances,
we note that the results in~\cite{BCC+14,BCC+15} provide an efficient method for simulating Hamiltonians of complexity
polylogarithmic in $1/\epsilon$. In more detail,
we could assume that we have a presentation of the Hamiltonian as
\begin{align}
\label{eq:proj}
H= \sum_{k=1}^K \alpha_k \Pi_k\;,
\end{align}
 or
\begin{align}
\label{eq:Hunit}
H= \frac 1 2 \sum_{k=1}^K \alpha_k U_k\;,
\end{align}
where the coefficients satisfy $\alpha_k >0$. The operators $\Pi_k$ are projectors (i.e., $(\Pi_k)^2=\Pi_k$) and
$U_k$ are unitaries of eigenvalues $\pm 1$ in this case.
Many qubit Hamiltonians can be represented in this way, where the $U_k$ correspond, for example, to 
Pauli operators. We note that Eq.~\eqref{eq:Hunit} can be reduced to Eq.~\eqref{eq:proj}
by a simple rescaling in which $\Pi_k=(U_k+\one_N)/2$ and disregarding the factor proportional to the $N \times N$
identity operator $\one_N$. 
In either case, we assume that there is a mechanism
available to simulate $\Pi_k$ or $U_k$; that is, we assume access to a unitary
\begin{align}
\nonumber
Q &= - \sum_{k=1}^K e^{i \pi \Pi_k} \otimes \ket k \! \bra k_{\rm a_2} \\
& = \sum_{k=1}^K U_k \otimes \ket k \! \bra k_{\rm a_2} \;,
\end{align}
where $\rm a_2$ is also an ancillary register of $O(\log (K))$ qubits.
The gate complexity of each $U_k$  is $C_U$, which depends on the problem,
and the gate complexity of the conditional $U_k$ operation is $O(C_U \log (K))$.

Once the Hamiltonian $H$ has been reduced to the form of Eq.~\eqref{eq:proj},
we obtain
\begin{align}
\label{eq:Hgapamp}
\tilde H = \sum_{k=1}^K \sqrt{\alpha_k} \Pi_k \otimes \left( \ket k \! \bra 0_{\rm a_1} + \ket 0 \! \bra k_{\rm a_1}\right) \;.
\end{align}
To be able to use the results in~\cite{BCC+15} for simulating $\tilde H$ in this case, we note that
\begin{align}
\nonumber
 \ket k \! \bra 0_{\rm a_1} + \ket 0 \! \bra k_{\rm a_1} =& \frac i 2 \left[ e^{-i (\pi/2) (\ket k \bra 0_{\rm a_1} + \ket 0 \bra k_{\rm a_1} )} 
- \right. \\
\label{eq:unitdecomp}
& \left. - e^{i (\pi/2) (\ket k \bra 0_{\rm a_1} + \ket 0 \bra k_{\rm a_1} )} \right] \;.
\end{align}
This provides a decomposition of $\tilde H$ as a linear combination of $\tilde K=O(K)$ unitary operations $\tilde U_k$; that is,
\begin{align}
\tilde H = \sum_{k=1}^{\tilde K} \tilde \alpha_k \tilde U_k \;,
\end{align}
and $\tilde \alpha_k>0$. The unitaries in the right hand side of Eq.~\eqref{eq:unitdecomp}
can be implemented with $O(\log (K))$ two-qubit gates using standard techniques.
The algorithm in~\cite{BCC+15} assumes the ability to implement the unitary
\begin{align}
\tilde Q= \sum_{k=1}^{\tilde K} \tilde U_k \otimes \ket k \! \bra k_{\rm a_2} \;.
\end{align}
Since the unitaries $\tilde U_k$ are directly related
to the $U_k$, $\tilde Q$ can be simulated with $O(1)$ uses of $Q$
and additional two-qubit gates that do not contribute
significantly to the final gate complexity.

The query complexity of the method in~\cite{BCC+15} is determined by the number of uses of $\tilde Q$
to implement an approximation of $\tilde W(t)$. The gate complexity stated in~\cite{BCC+15} is the number of additional
two-qubit gates required.
Then, the results in~\cite{BCC+15} provide a Hamiltonian simulation method $W$
to approximate $\tilde W(t)$ for this case, within precision $\epsilon$, of query complexity
\begin{align}
O \left( \tau \log(\tau/\epsilon) / \log\log(\tau/\epsilon)\right) \;.
\end{align}
Here, $\tau = |t| \sum_{k} {\tilde \alpha_k}$ and thus $\tau =O( |t| \sum_{k} {\sqrt{ \alpha_k}})$.
The additional gate complexity of $W$ obtained in~\cite{BCC+15} for this case is
\begin{align}
O \left(K \tau \log(\tau/\epsilon) / \log\log(\tau/\epsilon)\right) \;.
\end{align}
These results also imply that the overall gate complexity
of $W$ is
\begin{align}
\label{eq:Wgatecomp}
C_W(t,\epsilon)=O \left(( \log (K) C_U +K) \tau \frac{\log(\tau/\epsilon)}{  \log\log(\tau/\epsilon)} \right) \;.
\end{align}
We refer to~\cite{BCC+15} for more details.

In general, our quantum algorithm to sample from Gibbs distributions provides
 an exponential improvement in terms of $1/\epsilon$,
with respect to other known algorithms~\cite{PW09,CW10,TOVPV11}, 
whenever $C_W(t,\epsilon)$
is polylogarithmic in $1/\epsilon$. As discussed,
this is the case for a large class of Hamiltonians such as those
when the $U_k$ are presented as Pauli operators, so that $C_U=O(n)$.

For the quantum algorithm that computes an estimate
of the hitting time of a Markov chain, we will assume that
we have query access to the Hamiltonian $H$, and that $H$ can be presented
as in Eq.~\eqref{eq:proj}. This assumes
the existence of a procedure that outputs the matrix elements of $H$.
Constructing a quantum circuit $W$ that approximates the evolution with $\tilde H$ in this 
case is  technically involved and we leave that analysis for Appx.~\ref{appx}.
As in the previous case, we use the methods in~\cite{BCC+14,BCC+15}
to show that $C_W(t,\epsilon)$ is almost linear in $|t|$
and sublogarithmic in $|t|/\epsilon$.

Another useful technique for our quantum algorithms,
also used in~\cite{SOGKL02,CW12,BCC+15,CKS15},
regards the implementation of linear combinations
of unitary operations. More specifically, assume that $X=\sum_{l=0}^{L-1} \gamma_l V_l$,
where $\gamma_l>0$ and $V_l$ are unitary operations, and that there is a mechanism
to implement $V_l$. That is, we have access to the unitary 
\begin{align}
R = \sum_{l=0}^{L-1} V_l \otimes \ket l \! \bra l_{\rm a_3} \; ,
\end{align}
where $\rm a_3$ is an ancillary register of $O(\log (L))$ qubits. Lemma 6 of~\cite{CKS15} implies
that we can prepare a normalized version of the state $X \ket \phi$ 
with $O(\gamma/\| X \ket \phi \|)$ uses of $R$ in addition to $O(L \gamma/\| X \ket \phi \|)$ two-qubit gates,
where $\gamma=\sum_{l=0}^{L-1}\gamma_l$. When $V_l=V^l$, for some unitary $V$, and the gate complexity of $V$ is $C_V$,
the gate complexity of $R$ is $O(L C_V)$. 
In this case, the overall gate complexity of the algorithm is $O(L C_V \gamma/\| X \ket \phi \|)$. 
This result follows from Lemma 8 of~\cite{CKS15}. The implication
is that the overall gate complexity is dominated by the largest gate complexity 
of the unitaries in $R$ times the number of amplitude amplification steps.

For completeness, the quantum algorithm to implement $X$ 
is built upon $O(\gamma/\| X \ket \phi \|)$ amplitude amplification steps~\cite{Gro96}.
The operation for state preparation starts by preparing the ancillary state 
\begin{align}
\label{eq:Bgate}
B \ket 0_{\rm a_3}= \frac 1 {\sqrt \gamma }\sum_{l=0}^{L-1} \sqrt{\gamma_l} \ket l_{\rm a_3} \;,
\end{align}
where $B$ is unitary. Applying $B$ requires $O(L)$ two-qubit gates and, in those cases
where we can exploit the structure of the coefficients $\gamma_l$,
it can be done more efficiently. The state preparation step then applies $R$ followed by $B^\dagger$.
One can show that the final state of this step is
\begin{align}
\left( \frac X \gamma  \ket \phi \right) \otimes \ket 0_{\rm a_3} + \sket {\Theta^\perp} \;,
\end{align}
where $\sket{\Theta^\perp}$ is supported in the subspace orthogonal to $\ket 0_{\rm a_3}$.
Amplitude amplification allows us to amplify the probability of observing the state $\ket 0_{\rm a_3}$
to a constant.
This state corresponds to the desired outcome. The number of amplitude amplification steps
is linear in the inverse of $\|(X/\gamma) \ket \phi \|$.

The third useful technique regards amplitude estimation~\cite{KOS07}.
Let $T$ be a unitary that implements
\begin{align}
T \ket \phi \ket 0 = (A \ket \phi) \ket 0 + \ket{\Phi^\perp} \ket 1 \;,
\end{align}
where $A$ is an operator that satisfies $\|A\| \le 1$ and $\| \ket{\Phi^\perp}\| \le 1$.
Our goal is to obtain an estimate of $\bra \phi A \ket \phi = \bra \phi \bra 0 T \ket \phi \ket 0$.
The results in~\cite{KOS07} imply that there exists a quantum algorithm that outputs
an estimate of the expectation value of $T$ within precision $\epsilon$.
For constant confidence level ($c \approx 0.81$),
the quantum algorithm uses $T$ and other two-qubit gates $O(1/\epsilon)$ times.
It also uses the unitary that prepares the initial state $\ket \phi$, $O(1/\epsilon)$ times.
Increasing the confidence level can be done with an additional overhead
that is logarithmic in $|1-c|$.

\section{Preparation of Gibbs states}
\label{sec:Gibbs}

The thermal Gibbs state of  a quantum system $H$ at inverse temperature $\beta \ge 0$ is
the density matrix
\begin{align}
\rho = \frac 1 {\cal Z} e^{-\beta H} \;,
\end{align}
where ${\cal Z}=\Tr[e^{-\beta H}]=\sum_j e^{-\beta E_j}$ is the partition function. Then, the probability
of encountering the system in the quantum state $\ket{\psi_j}$, after measurement, is $p_j=e^{-\beta E_j}/{\cal Z}$.

Given a precision parameter $\epsilon>0$, a quantum algorithm to sample
from the Gibbs distribution $p_j$ can be obtained from a unitary $\bar V$ that satisfies
\begin{align}
\label{eq:Gibbsapprox}
\Tr_{\rm a} \left[ \bar V  \left( \ket 0 \! \bra{0}  \otimes \ket 0 \! \bra{0}_{\rm a}   \right) \bar V^\dagger \right] = \hat \rho
\end{align}
and
\begin{align}
\label{eq:tracecond}
\frac 1 2 \| \hat \rho - \rho \|_1 \le \epsilon \; .
\end{align}
We use the label  ${\bf a}$  for an ancillary qubit system that will be discarded 
at the end of the computation. The dimension of ${\bf a}$ depends on the algorithm.
The requirement on the trace distance in Eq.~\eqref{eq:tracecond}
implies that no measurement can distinguish between $\rho$ and $\hat \rho$ 
with probability greater than $\epsilon$~\cite{Hel76}.

The main result of this section is:
\begin{theorem}
\label{thm:main1}
There exists a quantum algorithm that prepares an approximation of the Gibbs state. The quantum algorithm
implements a unitary $\bar V$ of gate complexity
\begin{align}
O\left (\sqrt{\frac N {\cal Z}} \left( C_W(t,\epsilon')+ n + \log (J) \right) \right) \; ,
\end{align}
with $t=O(\sqrt{\beta \log(1/\epsilon')})$, $\epsilon'=O(\epsilon \sqrt{{\cal Z}/N})$, and $J=O(\sqrt{\| H \|\beta}\log(1/\epsilon'))$.
\end{theorem}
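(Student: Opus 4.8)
The plan is to build the unitary $\bar V$ in three stages, mirroring the structure of the three techniques in Sec.~\ref{sec:maintools}. First, observe that $e^{-\beta H} = (e^{-\beta H/2})^2$, so it suffices to implement (up to normalization) the non-unitary operator $e^{-\beta H/2}$ acting on a suitable initial state and then postselect. The key point is that $e^{-\beta H/2}$ can be written in terms of $\tilde H$ via the identity $e^{-\beta H/2} \otimes \ket 0 \bra 0_{\rm a_1} = e^{-(\beta/2) \tilde H^2}$ restricted to the $\ket 0_{\rm a_1}$ subspace, using Eq.~\eqref{eq:mainpropH}. Since $\tilde H$ is Hermitian, $e^{-(\beta/2)\tilde H^2}$ is a well-behaved function of it; the standard trick (as in~\cite{SOGKL02,BCC+14,CKS15}) is to expand the Gaussian $e^{-(\beta/2) x^2}$ as an integral $\int \rD y\, g(y) e^{-i x y}$ over a finite range $|y| \le t$ with $t = O(\sqrt{\beta \log(1/\epsilon')})$, truncate, and discretize into $J = O(\sqrt{\|H\|\beta}\log(1/\epsilon'))$ points. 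This yields $e^{-(\beta/2)\tilde H^2} \approx \sum_l \gamma_l \tilde W(y_l)$, a linear combination of the evolution operators $\tilde W(y_l) = \exp(-i\tilde H y_l)$, each of which is implemented by the circuit $W$ of gate complexity $C_W(y_l,\epsilon') \le C_W(t,\epsilon')$.

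Second, I would apply the linear-combination-of-unitaries technique (Lemma 6 and Lemma 8 of~\cite{CKS15}, as recalled in the excerpt) with $X = e^{-(\beta/2)\tilde H^2}$, $V_l = W$ approximating $\tilde W(y_l)$, and $\gamma = \sum_l \gamma_l$. Taking the initial state $\ket\phi = \frac{1}{\sqrt N}\sum_j \ket{\psi_j}$ (the maximally mixed state in disguise — here I'd actually use a maximally entangled state $\frac{1}{\sqrt N}\sum_j \ket j \ket j$ on a doubled register so that tracing out one half gives the correct mixed Gibbs state), one has $\| X \ket\phi \|^2 = \frac 1 N \sum_j e^{-\beta E_j} = {\cal Z}/N$, so $\|X\ket\phi\| = \sqrt{{\cal Z}/N}$. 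The number of amplitude-amplification steps is then $O(\gamma/\|X\ket\phi\|) = O(\sqrt{N/{\cal Z}})$ after checking that $\gamma = O(1)$ (the integral representation of the Gaussian has bounded $L^1$ norm). Each step costs $C_W(t,\epsilon') + O(n + \log J)$ gates — the $n$ from preparing the entangled initial state and the conditional evolution, the $\log J$ from the ancilla register $\rm a_3$ and the state-preparation gate $B$ of Eq.~\eqref{eq:Bgate}. Multiplying gives the claimed complexity.

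Third, I would track the error budget carefully. There are three sources: truncating/discretizing the Gaussian integral (controlled by the choice of $t$ and $J$), the Hamiltonian-simulation error $\epsilon'$ in each $W$, and the amplitude-amplification not producing exactly $\ket 0_{\rm a_3}$. Each contributes additively to $\| \hat\rho - \rho\|_1$; the amplification of the error by the factor $\gamma/\|X\ket\phi\| = O(\sqrt{N/{\cal Z}})$ in the postselection step is what forces $\epsilon' = O(\epsilon\sqrt{{\cal Z}/N})$ rather than $\epsilon' = O(\epsilon)$. This rescaling only enters $C_W$ and $J$ logarithmically, so it does not affect the leading $\sqrt{N/{\cal Z}}$ factor.

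The main obstacle I anticipate is the error analysis of the postselected state: one must show that if the unnormalized output state is $\epsilon'$-close (in norm) to $(e^{-\beta H/2}\otimes e^{-\beta H/2})\ket\phi_{\rm ent}$ before amplification, then after amplification and tracing out the ancillas the trace distance to $\rho$ is $O(\epsilon)$, given the specified $\epsilon'$. This requires relating a small perturbation of a subnormalized vector to the trace distance of the renormalized reduced density matrices, and the norm of the target vector being only $\sqrt{{\cal Z}/N}$ is precisely where the $\sqrt{N/{\cal Z}}$ amplification of errors comes from. A secondary technical point is verifying $\gamma = O(1)$ for the particular Gaussian-integral discretization and confirming the claimed scalings $t = O(\sqrt{\beta\log(1/\epsilon')})$ and $J = O(\sqrt{\|H\|\beta}\log(1/\epsilon'))$ — these follow from standard Gaussian tail bounds and Nyquist-type sampling arguments, so I would state them and defer the routine estimates.
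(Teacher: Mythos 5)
Your proposal follows essentially the same route as the paper: decompose $e^{-\beta H/2}$ via the Hubbard--Stratonovich/Gaussian-integral identity acting through $\tilde H$, truncate and discretize the integral into $O(J)$ terms (the paper's Lemma~\ref{lem:firstdecomp}, proved by Poisson summation), implement the resulting linear combination of evolutions $e^{-iy_j\sqrt\beta\,\tilde H}$ via LCU with amplitude amplification on the maximally entangled input $\frac{1}{\sqrt N}\sum_\sigma\ket\sigma\ket\sigma$, and use $\|e^{-\beta H/2}\ket{\phi_0}\|=\sqrt{{\cal Z}/N}$ to get the $O(\sqrt{N/{\cal Z}})$ amplification count; the scalings of $t$, $J$, $\gamma$ and the reason for $\epsilon'=O(\epsilon\sqrt{{\cal Z}/N})$ are all correctly identified. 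One slip: in the error-budget paragraph you write the target as $(e^{-\beta H/2}\otimes e^{-\beta H/2})\ket{\phi_{\rm ent}}$, but $e^{-\beta H/2}$ must act on only \emph{one} register of the entangled pair (as your own earlier norm computation $\|X\ket\phi\|^2={\cal Z}/N$ presupposes); applying it to both halves would produce the Gibbs state at inverse temperature $2\beta$.
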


When $H$ is presented as in Eq.~\eqref{eq:proj}, we can replace $C_W(t,\epsilon')$ by Eq.~\eqref{eq:Wgatecomp}
if we use the best-known Hamiltonian simulation algorithm.
In cases of interest, such as qubit systems given by Hamiltonians
that are linear combinations of Pauli operators, we have $\alpha_k=O(1)$, $C_U=O(n)=O(\log(N))$, and $K= O(n)=O(\log (N))$. In this case,
the overall gate complexity is
\begin{align}
O\left( \sqrt{\frac {N \beta } {\cal Z}} {\rm polylog}\left( \sqrt{\frac {N \beta} {\cal Z} } \frac 1 \epsilon \right) \right) \;.
\end{align}
The important result is that the complexity of our algorithm is polylogarithmic in $1/\epsilon$ and also improves upon
the complexity in $\beta$ with respect to the methods in~\cite{PW09,CW10}.

\vspace{0.5cm}
Our quantum algorithm to sample from Gibbs distributions uses the two techniques
discussed in Sec.~\ref{sec:maintools}.
In this case, we will be interested in implementing an operator proportional to $e^{-\beta H/2}$.
To find a decomposition as a linear combination of unitaries, we invoke the so-called Hubbard-Stratonovich transformation~\cite{Hub59}:
\begin{align}
\label{eq:HS}
e^{-\beta H/2} = \sqrt{\frac 1 {2 \pi}} \int_{-\infty}^{\infty} dy \; e^{-y^2/2} e^{-i y \sqrt{\beta H}} \;.
\end{align}
In our case, we do not have a method to simulate the evolution with $\sqrt{H}$.
Nevertheless, we assume that we can evolve with $\tilde H$, which satisfies Eq.~\eqref{eq:mainpropH}.
Then, 
\begin{align}
\label{eq:HS2}
(e^{-\beta H/2} & \ket \phi ) \otimes \ket 0_{\rm a_1}=
\\
\nonumber
& = \left( \sqrt{\frac 1 {2 \pi}} \int_{-\infty}^{\infty} dy \; e^{-y^2/2} e^{-i y \sqrt{\beta} \tilde H} \right) \ket \phi  \ket 0_{\rm a_1} \;,
\end{align}
for any state $\ket \phi$. Note that the ancilla $\rm a_1$ remains in the state $\ket 0_{\rm a_1}$ and will be discarded at the end
of the computation.
Equation~\eqref{eq:HS2} implies that the operator $e^{-\beta H/2}$ can be approximated by a linear combination
of evolutions under $\tilde H$. Because $y \in (-\infty , \infty)$, we will need to find an approximation
 by a finite, discrete sum of operators $e^{-i y_j \sqrt{\beta} \tilde H}$. We obtain:
\begin{lemma}
\label{lem:firstdecomp}
Let 
\begin{align}
X'=\sqrt{\frac 1 {2 \pi}} \sum_{j=-J}^J \delta y \; e^{-y_j^2/2} e^{-i y_j \sqrt{\beta} \tilde H} \;,
\end{align}
where $y_j = j \delta y$, for some $J = \Theta( \sqrt{ \|  H \| \beta  }  \log(1/\epsilon') )$ and $\delta y = \Theta(1/\sqrt{ \|H \| \beta \log(1/\epsilon')})$. Then,
if $\|H \| \beta \ge 4$ and  $\log(1/\epsilon') \ge 4$,
\begin{align}
\label{eq:expapprox}
\| (e^{-\beta H/2} \ket \phi) \otimes \ket 0_{\rm a_1} - X' \ket \phi  \ket 0_{\rm a_1} \| \le \epsilon' /2\; ,
\end{align}
for all states $\ket \phi$.
\end{lemma}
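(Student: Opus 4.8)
The plan is to start from the Hubbard--Stratonovich representation in Eq.~\eqref{eq:HS2} and to estimate the difference between the exact integral over $\mathbb{R}$ and the finite sum $X'$ in two stages: a \emph{discretization} error, from replacing the integral by the infinite trapezoidal sum with spacing $\delta y$ over all $y_j = j\,\delta y$, $j\in\mathbb{Z}$; and a \emph{truncation} error, from dropping the terms with $|j|>J$. Two facts will be used throughout: $e^{-iy\sqrt{\beta}\tilde H}$ is unitary, and iterating Eq.~\eqref{eq:mainpropH} gives $(\tilde H)^{2m}\bigl(\ket\phi\otimes\ket 0_{\rm a_1}\bigr) = \bigl(H^m\ket\phi\bigr)\otimes\ket 0_{\rm a_1}$ for all $m\ge 0$.

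First I would reduce everything to a scalar quadrature estimate. Writing $\ket\phi = \sum_j c_j\ket{\psi_j}$ with $\sum_j|c_j|^2 = 1$, the identity above shows that for each $j$ the (at most two-dimensional) subspace spanned by $\ket{\psi_j}\otimes\ket 0_{\rm a_1}$ and $\tilde H\bigl(\ket{\psi_j}\otimes\ket 0_{\rm a_1}\bigr)$ is $\tilde H$-invariant, with $\tilde H$ acting on it as $\sqrt{E_j}\,\sigma_x$; hence $e^{-iy\sqrt{\beta}\tilde H}\bigl(\ket{\psi_j}\otimes\ket 0_{\rm a_1}\bigr) = \cos\!\bigl(y\sqrt{\beta E_j}\bigr)\ket{\psi_j}\otimes\ket 0_{\rm a_1} - \frac{i\sin(y\sqrt{\beta E_j})}{\sqrt{E_j}}\,\tilde H\bigl(\ket{\psi_j}\otimes\ket 0_{\rm a_1}\bigr)$ (the case $E_j=0$ being trivial, and $\tilde H\ket{\psi_j}\ket 0_{\rm a_1}$ being orthogonal to every vector of the form $(\cdot)\otimes\ket 0_{\rm a_1}$). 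Since $\sin$ is odd in $y$, the $\tilde H\ket{\psi_j}\ket 0_{\rm a_1}$ component cancels both in the symmetric integral $\int_{-\infty}^{\infty}dy$ and in the symmetric sum over $y_j$, and the vectors $\ket{\psi_j}\otimes\ket 0_{\rm a_1}$ are orthonormal. Therefore the squared left-hand side of Eq.~\eqref{eq:expapprox} equals a $|c_j|^2$-weighted average of squared scalar errors, and it suffices to bound $|\mathcal E(E)|$ uniformly for $0\le E\le\|H\|$, where $\mathcal E(E)$ is the error made in approximating $\frac{1}{\sqrt{2\pi}}\int_{-\infty}^{\infty}dy\,e^{-y^2/2}\cos(y\sqrt{\beta E})$ by $\frac{\delta y}{\sqrt{2\pi}}\sum_{|j|\le J}e^{-y_j^2/2}\cos(y_j\sqrt{\beta E})$.

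For the discretization part, the crude $O(\delta y)$ Riemann-sum bound is far too weak --- it would give an error of order $1/\sqrt{\log(1/\epsilon')}$ --- so the key step is to exploit that $g_E(y) := e^{-y^2/2}\cos(y\sqrt{\beta E})$ is entire with a Gaussian-fast-decaying Fourier transform. By Poisson summation, $\delta y\sum_{j\in\mathbb{Z}}g_E(j\,\delta y) - \int_{-\infty}^{\infty}g_E(y)\,dy = \sum_{k\ne 0}\widehat{g_E}(2\pi k/\delta y)$, where $\widehat{g_E}$ is a sum of two unit-height Gaussians centered at $\pm\sqrt{\beta E}$; thus this error is bounded by a geometric series $\propto\sum_{k\ge 1}\bigl(e^{-(2\pi k/\delta y - \sqrt{\beta E})^2/2} + e^{-(2\pi k/\delta y + \sqrt{\beta E})^2/2}\bigr)$. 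Using $\sqrt{\beta E}\le\sqrt{\|H\|\beta}$ together with $2\pi/\delta y = \Theta(\sqrt{\|H\|\beta\,\log(1/\epsilon')})$, which for $\log(1/\epsilon')\ge 4$ dominates $\sqrt{\|H\|\beta}$, each exponent is at least of order $(\pi k/\delta y)^2 = \Theta\bigl(k^2\,\|H\|\beta\,\log(1/\epsilon')\bigr)\ge\Theta\bigl(k^2\log(1/\epsilon')\bigr)$ by $\|H\|\beta\ge 4$, so the series is at most $\epsilon'/4$ once the constant in $\delta y$ is chosen appropriately. For the truncation part I would bound $\frac{\delta y}{\sqrt{2\pi}}\sum_{|j|>J}e^{-y_j^2/2}$ by the Gaussian tail $\frac{1}{\sqrt{2\pi}}\int_{|y|>Y}e^{-y^2/2}\,dy\le e^{-Y^2/2}$ with $Y = J\,\delta y = \Theta(\sqrt{\log(1/\epsilon')})$, which is at most $\epsilon'/4$ once the constant in $J$ is large enough. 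Adding the two bounds yields $|\mathcal E(E)|\le\epsilon'/2$ for all $E$, hence Eq.~\eqref{eq:expapprox}. The main obstacle is exactly the discretization step: obtaining exponential rather than polynomial decay of the quadrature error forces one to use the Fourier/spectral structure of the Gaussian-damped oscillatory integrand, and the prescribed scalings of $J$ and $\delta y$, with the hypotheses $\|H\|\beta\ge 4$ and $\log(1/\epsilon')\ge 4$, are precisely what keeps the implied constants under control.
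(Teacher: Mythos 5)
Your proposal is correct and takes essentially the same route as the paper: both reduce to a scalar quadrature estimate for a Gaussian-damped oscillatory integrand, bound the discretization error via Poisson summation (which turns the quadrature error into a sum of Gaussian tails at the dual lattice points $2\pi k/\delta y$), and bound the truncation error by a Gaussian tail at $Y=J\,\delta y=\Theta(\sqrt{\log(1/\epsilon')})$. The only cosmetic difference is in the reduction to a scalar: the paper works directly with eigenstates of $\tilde H$ (establishing the operator-norm bound $\|X'-e^{-\beta\tilde H^2/2}\|\le\epsilon'/2$ and then restricting to the $\ket 0_{\rm a_1}$ sector), whereas you work with eigenstates of $H$, expose the $\sqrt{E_j}\,\sigma_x$ block structure of $\tilde H$ on each two-dimensional invariant subspace, and invoke oddness of $\sin$ to kill the off-sector component; these are equivalent, and the paper's framing is marginally slicker while yours makes the cancellation more explicit.
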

\begin{proof}
Consider the real function
\begin{align}
f(\tilde x) = \frac 1 {\sqrt{2\pi}}  \sum_{j=-\infty}^\infty \delta y \; e^{-y_j^2/2} e^{-i y_j \sqrt{\beta} \tilde x}
\end{align}
where $\tilde x \in \mathbb R$ and assume $|\tilde x| \le a < \infty$. The Poisson summation formula and the Fourier transform of the Gaussian  imply
\begin{align}
f(\tilde x)=\sum_{k=-\infty}^\infty e^{-\omega_k^2/2} \;,
\end{align}
where $\omega_k = -\sqrt \beta \tilde x + k/\delta y$. Then, there exists 
\begin{align}
\delta y = \Omega \left( \frac 1 {a \sqrt{\beta}  + \sqrt{\log(1/\epsilon')}}\right)
\end{align}
such that $|f(\tilde x) - e^{-\omega_0^2/2}| \le \epsilon'/4$. Note that if $a \sqrt{\beta} \ge 2$ and $\sqrt{\log(1/\epsilon')} \ge 2$,
we can choose
\begin{align}
\delta y = \Theta \left( \frac 1 {a \sqrt{\beta \log(1/\epsilon')}}\right) \;.
\end{align}
Also,
\begin{align}
\nonumber
\left| \frac 1 {\sqrt{2\pi}}  \sum_{j=J}^\infty \delta y \; e^{-y_j^2/2} e^{-i y_j \sqrt{\beta} \tilde x} \right| & \le \frac{\delta y} {\sqrt{2\pi}} \sum_{j=J}^\infty   e^{-y_j^2/2} \\
\nonumber & \le \frac{\delta y} {\sqrt{2\pi}} \sum_{j=J}^\infty   e^{-y_J y_j/2} \\
\nonumber
& \le \frac{\delta y} {\sqrt{2\pi}} \frac{e^{-y_J^2/2}} {1-e^{-y_J \delta y/2}} \;.
\end{align}
It follows that there exists a value for $J$, which implies $y_J = \Theta(\sqrt{\log(1/\epsilon')})$, such that
\begin{align}
\left|  f(\tilde x) - \frac 1 {\sqrt{2\pi}}  \sum_{j=-J}^J \delta y \; e^{-y_j^2/2} e^{-i y_j \sqrt{\beta} \tilde x} \right| \le \epsilon'/4 \;.
\end{align}
Using the triangle inequality we obtain
\begin{align}
\left| e^{-\omega_0^2/2} - \frac 1 {\sqrt{2\pi}}  \sum_{j=-J}^J \delta y \; e^{-y_j^2/2} e^{-i y_j \sqrt{\beta} \tilde x} \right| \le \epsilon'/2 \;,
\end{align}
and we can represent
\begin{align}
e^{-\omega_0^2/2} = \frac 1 {\sqrt{2\pi}} \int_{-\infty}^{\infty} dy \; e^{-y^2/2} e^{-i y \sqrt{\beta} \tilde x} \;.
\end{align}

To prove the Lemma, it suffices to act with $X'$ on the eigenstates of $\tilde H$.
We can then use the previous bounds if we assume that $\tilde x$ denotes the corresponding eigenvalue of $\tilde H$.
In particular, $a= \| \tilde H\|$. Then, if $\| \tilde H \| \sqrt \beta \ge 2$ and  $\sqrt{\log(1/\epsilon')} \ge 2$,
it suffices to choose
\begin{align}
\delta y = \Theta \left( \frac 1 {\| \tilde H \| \sqrt{\beta  \log(1/\epsilon')}}\right) \label{eq:deltay}
\end{align}
and
\begin{align}
J = \frac{y_J}{\delta y}= \Theta \left( \log(1/\epsilon')  \| \tilde H \| \sqrt{\beta  }\right) \;. \label{eq:maxJ}
\end{align}
The result follows from noticing that  $\| \tilde H \| = O(\sqrt{\| H \|})$ and that $X'$ then approximates $e^{-\beta (\tilde H)^2/2}$.
Since we act on initial states of the form $\ket \phi \ket 0_{\rm a_1}$, the action of $e^{-\beta (\tilde H)^2/2}$ is the same as that of $e^{-\beta H/2}$
on these states.

\end{proof}

In general, we cannot implement the unitaries $e^{-i y_j \sqrt{\beta} \tilde H}$ exactly but we can do so
up to an approximation error. We  obtain:
\begin{corollary}
Let  $\log(1/ \epsilon') \ge 4$,  $\|H \| \beta \ge 4$, and $W_j$ be a unitary that satisfies
\begin{align}
\label{eq:Hamsimapprox}
\| W_j - e^{-i y_j \sqrt{\beta} \tilde H} \| \le \epsilon'/4 \;
\end{align}
for all $j=-J,-J+1,\ldots,J$.
Let 
\begin{align}
\label{eq:Xdef1}
X=\sqrt{\frac 1 {2 \pi}} \sum_{j=-J}^J \delta y \; e^{-y_j^2/2} W_j \;.
\end{align}
Then,
\begin{align}
\label{eq:expapprox2}
\| (e^{-\beta H/2} \ket \phi) \otimes \ket 0_{\rm a_1} - X \ket \phi  \ket 0_{\rm a_1} \| \le \epsilon' \;.
\end{align}
\end{corollary}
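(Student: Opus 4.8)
The plan is to combine the approximation bound from Lemma~\ref{lem:firstdecomp} with the gate-level error in \eqref{eq:Hamsimapprox} by a triangle-inequality argument. First I would write $X - X' = \sqrt{1/(2\pi)} \sum_{j=-J}^J \delta y \, e^{-y_j^2/2} (W_j - e^{-i y_j \sqrt{\beta} \tilde H})$, so that for any state $\ket \phi \ket 0_{\rm a_1}$,
\begin{align}
\nonumber
\| (X - X') \ket \phi \ket 0_{\rm a_1} \| \le \sqrt{\frac 1 {2\pi}} \sum_{j=-J}^J \delta y \, e^{-y_j^2/2} \| W_j - e^{-i y_j \sqrt{\beta} \tilde H} \| \;.
\end{align}

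Next I would bound each operator-norm factor by $\epsilon'/4$ using \eqref{eq:Hamsimapprox}, leaving the prefactor $(\epsilon'/4) \sqrt{1/(2\pi)} \sum_{j=-J}^J \delta y \, e^{-y_j^2/2}$. The remaining step is to show this discretized Gaussian sum is at most $1$ (in fact close to $1$): since $y_j = j\delta y$, the sum $\sqrt{1/(2\pi)}\sum_j \delta y\, e^{-y_j^2/2}$ is a Riemann sum for $\sqrt{1/(2\pi)}\int_{-\infty}^\infty e^{-y^2/2}dy = 1$, and one can check it stays bounded by a constant of order $1$ for the relevant $\delta y$ (indeed, the analysis in the proof of Lemma~\ref{lem:firstdecomp} already controls the tail $j \ge J$ and the Poisson-summation error, so the full sum is within $O(\epsilon')$ of $1$); hence $\| (X - X') \ket \phi \ket 0_{\rm a_1} \| \le \epsilon'/4 \cdot (1 + O(\epsilon')) \le \epsilon'/2$ under the stated hypotheses $\log(1/\epsilon') \ge 4$ and $\|H\|\beta \ge 4$.

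Finally I would combine this with \eqref{eq:expapprox} of Lemma~\ref{lem:firstdecomp} via the triangle inequality:
\begin{align}
\nonumber
\| (e^{-\beta H/2} \ket \phi) \ket 0_{\rm a_1} - X \ket \phi \ket 0_{\rm a_1} \| & \le \| (e^{-\beta H/2} \ket \phi) \ket 0_{\rm a_1} - X' \ket \phi \ket 0_{\rm a_1} \| \\
\nonumber
& \quad + \| (X' - X) \ket \phi \ket 0_{\rm a_1} \| \le \epsilon'/2 + \epsilon'/2 = \epsilon' \;,
\end{align}
which is exactly \eqref{eq:expapprox2}.

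The only genuinely non-routine point is making the bound $\sqrt{1/(2\pi)}\sum_{j=-J}^J \delta y\, e^{-y_j^2/2} \le 1$ (or $\le$ some explicit constant) rigorous rather than hand-waved; I expect this to be the main obstacle, though it is easily handled either by comparing the truncated sum to the full integral and absorbing the $O(\epsilon')$ slack into the constants, or by noting the midpoint/endpoint Riemann-sum estimate for the monotone-then-monotone Gaussian gives a sum within $\delta y \cdot \max e^{-y^2/2} = \delta y$ of the integral $1$, and $\delta y$ is small by \eqref{eq:deltay}. The rest is bookkeeping with the triangle inequality.
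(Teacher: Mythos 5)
Your proposal is correct and follows essentially the same route as the paper: a triangle inequality splitting the error into the $X'$-approximation error from Lemma~\ref{lem:firstdecomp} plus the Hamiltonian-simulation error, and then bounding the discretized Gaussian sum by invoking Lemma~\ref{lem:firstdecomp} in the degenerate case $H=\tilde H=0$ (the paper states this explicitly, obtaining $\bigl|\sqrt{1/2\pi}\sum_j \delta y\, e^{-y_j^2/2}-1\bigr|\le\epsilon'/2\le 1/4$ and hence the sum is at most $5/4$, giving $\|X-X'\|\le(5/16)\epsilon'$). The ``non-routine point'' you flag is indeed resolved exactly the way you anticipate, so the argument is complete.
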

\begin{proof}
The coefficients in the decomposition of $X'$ in Lemma~\ref{lem:firstdecomp} satisfy
\begin{align}
\left | \sqrt{\frac 1 {2 \pi}} \sum_{j=-J}^J \delta y \; e^{-y_j^2/2} - 1 \right| \le \epsilon'/2 \le 1/4\;,
\end{align}
and thus
\begin{align}
\left | \sqrt{\frac 1 {2 \pi}} \sum_{j=-J}^J \delta y \; e^{-y_j^2/2}   \right| \le 5/4\;.
\end{align}
This follows from Eq.~\eqref{eq:expapprox} for the case of $H=\tilde H=0$. The triangle inequality and Eq.~\eqref{eq:Hamsimapprox} imply
\begin{align}
\| X - X'\| \le (5/16) \epsilon' \;,
\end{align}
and together with Eq.~\eqref{eq:expapprox} we obtain the desired result.
\end{proof}

In Sec.~\ref{sec:maintools} we described a technique to implement $X = \sum_{l=0}^{L-1} \gamma_l V_l$.
In this case, $l=j+J$ and $L=2J+1$. The coefficients and unitaries are $e^{-y_j^2/2}$ and $W_j$,
respectively. 
The quantum algorithm for preparing Gibbs states will aim at preparing
a normalized version of $X \ket {\phi_0}$, for a suitable initial state $\ket {\phi_0}$, using the technique of Sec.~\ref{sec:maintools}.

\subsection{Algorithm}
We set $\epsilon'=O(\epsilon \sqrt{{\cal Z}/N})$.
Our quantum circuit $\bar V$ is defined in two basic steps.
The first step regards the preparation of a maximally entangled state
\begin{align}
\ket{\phi_0} = \frac 1 {\sqrt N} \sum_{j=0}^{N-1} \ket{\psi_j} \otimes \ket{\psi_j^*}_{\rm a_4} \otimes \ket0_{\rm a_1, \rm a_2, \rm a_3}
\end{align}
where we used an additional ancillary system ${\rm a_4}$ of $n$ qubits. 
$\rm a_1$, $\rm a_2$,  $\rm a_3$, and $\rm a_4$ build the ancillary register $\bf a$ of Eq.~\eqref{eq:Gibbsapprox}. Note that $\ket{\phi_0}$
coincides with the maximally entangled state
\begin{align}
\label{eq:simpleinitstate}
\ket{\phi_0} = \frac 1 {\sqrt N} \sum_{\sigma=0}^{N-1} \ket{\sigma} \ket{\sigma}_{\rm a_4}\otimes \ket0_{\rm a_1, \rm a_2, \rm a_3} \;,
\end{align}
where $\ket \sigma$ is a $n$-qubit state in the computational basis, i.e., $\ket \sigma=\ket{0 \ldots 0}, \ket{0 \ldots 1}, \ldots$.

The second step regards the preparation of a normalized version of $X \ket{\phi_0}$. This step
uses the algorithm for implementing linear combinations of unitary operations described in Sec.~\ref{sec:maintools}, which also
uses amplitude amplification.
The operator $X$ is defined in Eq.~\eqref{eq:Xdef1} and requires a Hamiltonian simulation method for implementing $W_j$.

\subsection{Validity and complexity}
As described, our quantum algorithm prepares the normalized state
\begin{align}
\frac{X \ket {\phi_0}} {\| X \ket {\phi_0} \|} \; ,
\end{align}
with constant probability. 
We also note that
\begin{align}
\label{eq:normdesired}
\| e^{-\beta H/2} \ket{\phi_0} \| = \sqrt{{\cal Z}/N}
\end{align}
and Eq.~\eqref{eq:expapprox2} implies
\begin{align}
\left|\| X \ket{\phi_0} \| - \sqrt{{\cal Z}/N} \right| = O(\epsilon \sqrt{{\cal Z}/N})
\end{align}
for our choice of $\epsilon'$. Then, the prepared state satisfies
\begin{align}
\label{eq:errorgibbs10}
\left \| \frac{X \ket {\phi_0}} {\| X \ket {\phi_0} \|} - \frac{e^{-\beta H/2} \ket {\phi_0}} {\| e^{-\beta H/2} \ket {\phi_0} \|} \right \| \le \epsilon/2 \;.
\end{align}

We note that
\begin{align}
\nonumber
&e^{-\beta H/2} \ket {\phi_0} = \\
&=\frac 1 {\sqrt N} \sum_{j=0}^{N-1} e^{-\beta E_j/2} \ket{\psi_j} \otimes \ket{\psi_j^*}_{\rm a_4} \otimes \ket0_{\rm a_1, \rm a_2, \rm a_3} \;.
\end{align}
If we disregard the ancillary system $\bf a$, this is the Gibbs state: The probability of obtaining $\ket{\psi_j}$, after measurement,
is proportional to $e^{-\beta E_j}$.
Then, the property of the trace norm being non-increasing 
under quantum operations and Eq.~\eqref{eq:errorgibbs10} imply
\begin{align}
\frac 1 2 \left \|  \hat \rho - \rho \right \| \le \epsilon \;,
\end{align}
where
\begin{align}
\hat \rho= \Tr_{\rm a} \left[ \frac {X \ket {\phi_0} \! \bra {\phi_0} X^\dagger} {\| X \ket {\phi_0} \|^2}  \right] \;;
\end{align}
see Eq.~\eqref{eq:Gibbsapprox}.
That is, $\hat \rho$ is the state prepared by our algorithm after tracing out the ancillary register $\bf a$.

The number of amplitude amplification steps
is $O(1/\|X \ket{\phi_0}\|)$ and Eq.~\eqref{eq:normdesired}
implies that this number is also $O(\sqrt{N/{\cal Z}})$. The gate complexity
of each step is the gate complexity of preparing $\ket{\phi_0}$ in addition
to the gate complexity of implementing $X$. The former is $O(n)$
as $\ket{\phi_0}$ takes the simple form of Eq.~\eqref{eq:simpleinitstate}
and can be prepared with $O(n)$ controlled operations. $X$ is implemented
in three stages as described in Sec.~\ref{sec:maintools}. The first stage requires the unitary $B$
used in Eq.~\eqref{eq:Bgate}. In this case, the coefficients $\gamma_l$ are proportional to
 $e^{-y_j^2/2}$. Then, the gate complexity of $B$ is $O(\log (J))$ in this case if we use 
 one of the methods developed in~\cite{KW08,Som15}. The second stage
 regards the implementation of $R$.
 In this example, $R$ is the unitary that implements $W_j$ conditional on the state $\ket j_{\rm a_3}$.
 Since $W_j$ corresponds to a Hamiltonian simulation algorithm that approximates evolutions with $\tilde H$,
 the gate complexity of $R$ is dominated by the largest gate complexity of $W_j$. In particular, $W_j$ 
 approximates $e^{-i \tilde H t}$ within precision $\epsilon'$ and for maximum $t=O(\sqrt{\beta \log(1/\epsilon')})$.
 Then the gate complexity of $R$ is order $C_W(t,\epsilon')$.

The overall gate complexity is then
\begin{align}
O\left (\sqrt{\frac N {\cal Z}} \left( C_W(t,\epsilon')+ n + \log (J) \right) \right) \;,
\end{align}
with $t=O(\sqrt{\beta \log(1/\epsilon')})$ and $\epsilon'=O(\epsilon \sqrt{{\cal Z}/N})$.
This proves Thm.~\ref{thm:main1}.

\section{Estimation of Hitting times}
\label{sec:HT}
We consider a stochastic process that models a Markov chain.
The number of different configurations is $N$ and $P$ is the $N \times N$
stochastic matrix. We label each configuration as $\sigma=0,1,\ldots,N-1$ and
the entries of $P$ are transition or conditional
probabilities $\Pr(\sigma'|\sigma)$. We will assume that $P$ is reversible and
irreducible, satisfies the so-called detailed balance condition~\cite{NB98}, and has 
nonnegative eigenvalues.
The unique fixed point of $P$ is the
$N$-dimensional
probability vector ${\bf \pi}$. It is useful to use the bra-ket notation, where $\ket \nu$
represents a vector ${\bf \nu} \in \mathbb C^N$ and $\bra v = (\ket v)^\dagger$. Then, $P \ket \pi = \ket \pi$,
\begin{align}
\ket \pi = \sum_\sigma \pi_\sigma \ket \sigma \doteq \begin{pmatrix} \pi_0 \cr \vdots \cr \pi_{N-1} \end{pmatrix} \;,
\end{align}
and $\pi_\sigma$ is the probability of finding configuration $\sigma$ when sampling from the fixed point of $P$.

The hitting time of a stochastic process is roughly defined as the first time
at which the process is encountered in a particular subset of configurations.
To define the hitting time in detail, we assume that there is a subset $\cM$
of $N_\cM$ configurations that are ``marked'' and the remaining 
$N_\cU$ configurations constitute the ``unmarked'' subset $\cU$. Here, $N_\cM+N_\cU=N$.
With no loss of generality, the stochastic matrix $P$ takes the form
\begin{align}
P = \begin{pmatrix}  P_{\cU \cU} & P_{\cU \cM } \cr
P_{\cM \cU} & P_{\cM \cM} 
\end{pmatrix}\;,
\end{align}
where $P_{\cU \cU}$ and $P_{\cM \cM}$ are matrices  (blocks) of dimension $N_\cU \times N_\cU$ and $N_\cM \times N_\cM$,
respectively, and $P_{\cM \cU}$ and $P_{\cU \cM}$ are rectangular blocks. 
The entries of the block $P_{\cS' \cS}$ determine the 
probability of a configuration being in the subset $\cS'$ given that the previous configuration
was in the subset $\cS$. Our assumptions imply $\cU ,\cM \ne \{ \emptyset \}$ and $P_{\cU \cM},P_{\cM \cU}\ne 0$.
The hitting time is the expected time to find a marked configuration if the initial probability vector is $\ket \pi$.
That is, as in~\cite{KOR10}, we define the hitting time of $P$ via the following classical algorithm:
\begin{enumerate}
\item Set $t=0$
\item Sample $\sigma$ from $\ket \pi$
\item If $\sigma \in \cM$, stop
\item Otherwise, assign $t \leftarrow t+1$, apply $P$, and go to 3.
\end{enumerate}
The hitting time $t_h$ is  the expected value of the random variable $t$.

We let $\ket{\pi_\cU}$ and $\ket{\pi_\cM}$ represent the probability vectors obtained
by conditioning $\ket \pi$ on $\cU$ and $\cM$, respectively. These are
\begin{align}
\ket{\pi_\cU} = \frac{\sum_{\sigma \in \cU } \pi(\sigma) \ket \sigma }{\pi_\cU} \; , \; \ket{\pi_\cM} = \frac{\sum_{\sigma \in \cM } \pi(\sigma) \ket \sigma }{\pi_\cM} \; ,
\end{align}
with $\pi_\cU=\sum_{\sigma \in \cU } \pi(\sigma)$ and $\pi_\cM=\sum_{\sigma \in \cM } \pi(\sigma)$.
It is useful to define the modified Markov chain
\begin{align}
P'= \begin{pmatrix}  P_{\cU \cU} & 0 \cr
P_{\cM \cU} & \one_{N_\cM}
\end{pmatrix} \;,
\end{align}
which refers to an ``absorbing wall'' for the subset $\cM$.  Here, $\one_{N_\cM}$ is the $N_\cM \times N_\cM$ identity matrix.
As defined, $P'$ does not allow for transitions from the subset $\cM$ to the subset $\cU$. 
We will observe below that $P'$, and thus $P_{\cU \cU}$, play an important role in the determination of $t_h$.

Our definition of hitting time implies
\begin{align}
\label{eq:ht1}
 t_h =\sum_{t=0}^{\infty } t \; \Pr(t) \;,
 \end{align}
 where $\Pr(t)$ is the probability of  $t$ if we use the previous classical algorithm.
In particular, $\Pr(t=0)=\pi_\cM$.
We rewrite
\begin{align}
\label{eq:ht2}
 t_h = \sum_{t'=0}^{\infty }   \Pr(t > t')
 \end{align}
  so that we take into account the factor $t$ in Eq.~\eqref{eq:ht1}, i.e., $\Pr(t>t')=\Pr(t'+1)+\Pr(t'+2)+\ldots$.
Note that 
\begin{align}
\nonumber
 \Pr(t > t') & =\pi_\cU \bra{1_\cU} (P')^{t'} \ket{\pi_\cU} \\
 \label{eq:prht}
& =\pi_\cU \bra{1_\cU} (P_{\cU \cU})^{t'} \ket{\pi_\cU} \;,
\end{align}
where $\ket{1_\cU}= \sum_{\sigma \in \cU} \ket \sigma$. This is because, conditional on $t>0$, which occurs with probability $\pi_\cU$,
the initial probability vector is proportional to $(P_{\cU \cU})^{t'}\ket{\pi_\cU}$.
 Then, the probability
of having $t >t'$ is measured by the probability of remaining in $\cU$ after $P_{\cU \cU}$ was applied $t'$ times.
Equations~\eqref{eq:ht2} and \eqref{eq:prht} imply
\begin{align}
\label{eq:ht3}
t_h = \pi_\cU \bra{1_\cU} ( {\one_{N_\cU}-P_{\cU \cU}})^{-1} \ket{\pi_\cU} \; ,
\end{align}
where we used $(1-x)^{-1}=\sum_{t'=0}^{\infty} x^{t'}$. 
We note that $\one-P_{\cU \cU}$ is invertible under our assumptions, since 
the eigenvalues of $P_{\cU \cU}$ are strictly smaller than 1 (see below).

The complexity of a method that estimates $t_h$
using the previous classical algorithm
also depends on the variance of the random variable $t$.
This is
\begin{align}
\sigma^2 = \sum_{t=0}^\infty t^2 \Pr(t) -(t_h)^2 \;,
\end{align}
and after simple calculations, we can rewrite it as
\begin{align}
\sigma^2 =2 \pi_\cU \bra{1_\cU}  \sum_{t'=0}^{\infty } t' (P_{\cU \cU})^{t'} \ket{\pi_\cU}+ t_h - (t_h)^2 \;.
\end{align}
For constant confidence level and precision $\epsilon$ in the estimation of $t_h$, Chebyshev's inequality
implies that the previous classical algorithm must be executed $M=O((\sigma/\epsilon)^2)$ times
to obtain $t_1,\ldots,t_M$ and estimate $t_h$ as the average of the $t_i$. The expected
number of applications of $P$ is then $M t_h=O(t_h(\sigma/\epsilon)^2 )$.

To bound the classical complexity, we consider the worst case scenario
in which $\ket{\pi_\cU}$ is an eigenvector of $P_{\cU \cU}$ corresponding to its largest eigenvalue $1-\Delta <1$.
In that case, $t_h=\pi_\cU/\Delta$ and $\sigma^2=O(\pi_\cU/\Delta^2)$. When $\Delta \ll 1$,
the expected number of applications of $P$ is then $O(1/(\Delta^3 \epsilon^2))$ in this case. 
This determines the average complexity of the classical algorithm that estimates $t_h$.

The entries of the symmetric discriminant matrix $S$ of $P$ are
\begin{align}
S_{\sigma \sigma'}=\sqrt{\left(P \circ P^\dagger\right)_{\sigma \sigma'}} \;,
\end{align}
where $\circ$ is the Hadamard product.
The detailed balance condition implies
\begin{align}
\pi(\sigma') \Pr(\sigma|\sigma') = \Pr(\sigma'|\sigma) \pi(\sigma) 
\end{align}
and thus
\begin{align}
\sqrt{ \Pr(\sigma|\sigma') \Pr(\sigma'|\sigma)} = \Pr(\sigma|\sigma') \sqrt{\frac{\pi(\sigma')}{\pi(\sigma)}} \;.
\end{align}
Then,
\begin{align}
\label{eq:discriminant}
S=   D^{-1} P D 
  \;,
\end{align}
where $D$ is a diagonal matrix of dimension $N$
with entries given by $\sqrt{\pi(\sigma)}$. 
The symmetric matrix or Hamiltonian $\bar H = \one_N - S$ is known to be ``frustration free''~\cite{SBO07} and can be represented
as in Eq.~\eqref{eq:proj} using a number of techniques. For example, if $P$ has at most $d$ nonzero entries
per row or column (i.e., $P$ is $d$-sparse),
the number of terms $K$ in the representation of $\bar H$ can be made linear or quadratic in $d$; see Appx.~\ref{appx}
or~\cite{BOS15}
for more details.


We now let $\Pi_\cU$ be the projector into the subset $\cU$ and
define
\begin{align}
H = \Pi_\cU \bar H \Pi_\cU \;.
\end{align}
Note that 
\begin{align}
\label{eq:discriminantU}
H = \one_{N_\cU} - D^{-1}_\cU P_{\cU \cU} D_{\cU} \;,
\end{align}
where $D_{\cU}$ is the diagonal matrix obtained by projecting $D$ of Eq.~\eqref{eq:discriminant}
into the subspace $\cU$. That is, $D_\cU= \Pi_\cU D \Pi_\cU$ and Eq.~\eqref{eq:discriminantU} implies $H>0$.
Then, Eqs.~\eqref{eq:discriminant} and~\eqref{eq:ht3} imply
\begin{align}
\label{eq:ht5}
t_h=\pi_\cU \bra{\sqrt{\pi_\cU}} ( 1/H) \sket{\sqrt{\pi_\cU}} \; ,
\end{align}
where we defined $\sket{\sqrt{\pi_\cU}}=\sum_{\sigma \in \cU} \sqrt{\pi(\sigma)} \ket \sigma/\sqrt{\pi_\cU}$ so that
 $\sket{\sqrt{\pi_\cU}}$ is normalized according to the Euclidean norm.
 A similar expression for $t_h$ was obtained in~\cite{KOR10}.

In Appx.~\ref{appx} we describe how $H$
can be specified as $H= \sum_{k=1}^K \alpha_k \Pi_k$,
where $\alpha_k>0$ and $\Pi_k$ are projectors.
%
Then $H$ is of the form of Eq.~\eqref{eq:proj} and we write $\tilde H$
for the associated Hamiltonian according to Eq.~\eqref{eq:GAP}.
$C_W(t,\epsilon)$ is the complexity of 
approximating $\tilde W(t)=\exp(-i \tilde H t)$, and we roughly describe
a method for simulating $\tilde W(t)$ below.

A quantum algorithm to obtain $t_h$ 
can be constructed from the relation in Eq.~\eqref{eq:ht5}.
That is, $t_h$ coincides with $\pi_\cU$ times the expected value of the operator
$1/H$ in the pure state $\ket{\sqrt{\pi_\cU}}$. For our quantum algorithm, we also assume that there is a 
unitary procedure (oracle) $Q_\cU$ that allows us to implement the transformation 
\begin{align}
Q_\cU \ket \sigma = - \ket \sigma \ {\rm if} \ \sigma \in \cU
\end{align}
and $Q_\cU \ket \sigma=\ket \sigma$ otherwise. We also assume access
to a unitary $Q_\spi$ such that
\begin{align}
Q_\spi \ket 0 = \ket{\sqrt \pi} \;.
\end{align}
We write $C_\cU$ and $C_\spi$ for the respective gate complexities.
The main result of this section is:
\begin{theorem}
\label{thm:main2}
There exists a quantum algorithm to estimate $t_h$ within precision $\epsilon$
and constant confidence level
that implements a unitary $\bar V$ of gate complexity
\begin{align}
O \left( \frac 1 {\epsilon'} \left( C_W(t,\epsilon') + C_\cU + C_\spi +C_B\right) \right) \;,
 \end{align} 
where $C_B= O(\log(1/(\Delta \epsilon)))$, $\epsilon'=O(\epsilon\Delta/\log(1/(\epsilon \Delta))$,
and $t=O\left(\log(1/(\epsilon\Delta))/\sqrt{\Delta}\right)$.
\end{theorem}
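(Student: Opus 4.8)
The plan is to combine the three techniques from Sec.~\ref{sec:maintools} — the decomposition of $e^{-\beta H/2}$ as a linear combination of unitaries (here applied to $1/H$ instead), the routine for implementing linear combinations of unitaries, and amplitude estimation — in order to extract $t_h = \pi_\cU \bra{\sqrt{\pi_\cU}} (1/H) \sket{\sqrt{\pi_\cU}}$ from Eq.~\eqref{eq:ht5}. First I would note that the eigenvalues of $H$ lie in the interval $[\Delta, 2]$ (since $\one-P_{\cU\cU}$ has smallest eigenvalue $\Delta$, by the worst-case analysis that defines $\Delta$, and $H$ is positive with norm at most $2$). This means $1/H$ is a bounded operator with $\norm{1/H} \le 1/\Delta$. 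I would then invoke the methods of~\cite{CKS15} for the quantum linear systems problem: write $1/H$ as a linear combination of unitaries $e^{-i y_j \tilde H}$ of the form analogous to Lemma~\ref{lem:firstdecomp}, using an integral representation of $1/x$ (e.g.\ $1/x = \int e^{-xs}\,ds$ or the Fourier-type formula used in~\cite{CKS15}) and then acting with $\tilde H$ on states of the form $\ket\phi\ket{0}_{\rm a_1}$ so that $(\tilde H)^2$ reproduces $H$ as in Eq.~\eqref{eq:mainpropH}. The number of terms $L$ in the discrete sum, and the evolution times $y_j$, scale like $O(\log(1/(\epsilon\Delta))/\sqrt\Delta)$ and the $\ell_1$-norm of coefficients $\gamma=\sum_l \gamma_l$ scales like $O(1/\Delta)$ — this is where $t=O(\log(1/(\epsilon\Delta))/\sqrt\Delta)$ and the $1/\Delta$ factors come from.

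Next I would assemble the circuit $\bar V$. Using $Q_\spi$ I prepare $\ket{\sqrt\pi}$, and using $Q_\cU$ together with one further ancilla I project (via a post-selection flag) onto the $\cU$-block, obtaining a subnormalized version of $\sket{\sqrt{\pi_\cU}}$ with flag-probability $\pi_\cU$; equivalently $\ket{\sqrt\pi}$ restricted to $\cU$ has norm $\sqrt{\pi_\cU}$. Then I apply the linear-combination-of-unitaries routine of Sec.~\ref{sec:maintools} with the operator $X \approx 1/H$, using the unitary $B$ of Eq.~\eqref{eq:Bgate} (here $C_B = O(\log L) = O(\log(1/(\Delta\epsilon)))$) and a conditional Hamiltonian-simulation circuit $W_j$ for $e^{-i y_j \tilde H}$, each of gate complexity at most $C_W(t,\epsilon')$ with $t$ the maximal evolution time. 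Crucially, I do \emph{not} amplitude-amplify to extract the normalized state $X\sket{\sqrt{\pi_\cU}}/\norm{\cdots}$; instead I leave the circuit in the form $T\ket\phi\ket0 = (A\ket\phi)\ket0 + \ket{\Phi^\perp}\ket1$ so that the amplitude on the all-zero ancilla flags encodes, up to a known normalization, the quantity $\bra{\sqrt\pi}_\cU (X/\gamma)\ket{\sqrt\pi}_\cU$. Writing out the overlaps: the amplitude of the fully-flagged branch is $(\pi_\cU/\gamma)\bra{\sqrt{\pi_\cU}} X \sket{\sqrt{\pi_\cU}}$, which by Eq.~\eqref{eq:ht5} equals $t_h/(\gamma)$ up to the approximation error from $X\approx 1/H$. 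Then I apply the amplitude-estimation result of~\cite{KOS07}: with $O(1/\epsilon')$ uses of $T$ and of the state-preparation unitary, I get an estimate of this amplitude — equivalently of $t_h$ after rescaling by the known constant $\gamma$ — within the rescaled precision, and choosing $\epsilon' = O(\epsilon\Delta/\log(1/(\epsilon\Delta)))$ ensures both that the LCU truncation/simulation errors are small enough and that the final additive error on $t_h$ is $\epsilon$ (since $t_h \le 1/\Delta$ and the amplitude carries the factor $1/\gamma = \Theta(\Delta)$). Multiplying the $O(1/\epsilon')$ repetitions by the per-repetition cost $C_W(t,\epsilon') + C_\cU + C_\spi + C_B$ yields the stated gate complexity.

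The main obstacle I expect is the careful error and normalization bookkeeping in the step that turns the amplitude into an estimate of $t_h$. One must track three sources of error simultaneously: (i) the truncation error from replacing the integral representation of $1/x$ by a finite sum of $L$ terms (analogous to Lemma~\ref{lem:firstdecomp}, but for $1/x$ rather than $e^{-\beta x/2}$, which requires controlling the tail behavior at both small eigenvalues $\approx\Delta$ and the truncation of the integration variable); (ii) the Hamiltonian-simulation error $\norm{W_j - e^{-i y_j \tilde H}} \le \epsilon'/\text{poly}$ for each $j$; and (iii) the $O(1/\epsilon')$ intrinsic precision of amplitude estimation. These combine with the rescaling factor $\gamma = \Theta(1/\Delta)$ relating the estimated amplitude to $t_h$, so that an additive error $\delta$ in the amplitude becomes an additive error $\gamma\delta = \Theta(\delta/\Delta)$ in $t_h$; hence one needs $\epsilon' = O(\epsilon\Delta)$ up to the logarithmic factor, and one must verify that this choice is consistent with the $L=O(\log(1/(\epsilon\Delta))/\sqrt\Delta)$ and $t=O(\log(1/(\epsilon\Delta))/\sqrt\Delta)$ claimed. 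A secondary technical point is constructing the circuit $W$ approximating $\tilde W(t)$ when $H$ is given only through query access to matrix entries (the $d$-sparse discriminant case), which the authors defer to Appx.~\ref{appx}; I would simply cite that $C_W(t,\epsilon')$ is almost linear in $t$ and sublogarithmic in $t/\epsilon'$, as stated in Sec.~\ref{sec:maintools}.
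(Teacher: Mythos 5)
Your proposal follows the paper's route closely: you exploit the positivity of $H$ to write $1/H = \frac{1}{2}\int_0^\infty d\beta\, e^{-\beta H/2}$, decompose each $e^{-\beta H/2}$ by the Hubbard--Stratonovich transformation into evolutions under $\tilde H$, implement the resulting linear combination of unitaries \emph{without} amplitude amplification, and then use amplitude estimation with the $z_K/\gamma$ rescaling to extract $t_h$, just as the paper does. The one inaccuracy is that the discretization is a \emph{double} sum (over the $\beta$-integration index $k$ with $K=\Theta((1/\Delta)\log(1/(\Delta\epsilon))/\epsilon)$ terms, and the Hubbard--Stratonovich index $j$ with $J=\Theta(\sqrt{1/\Delta}\log^{3/2}(1/(\Delta\epsilon)))$ terms), so $L=\Theta(JK)$ is far larger than the $O(\log(1/(\epsilon\Delta))/\sqrt\Delta)$ you stated; this has no effect on your conclusion, however, because $L$ enters the gate complexity only logarithmically through $C_B=O(\log L)=O(\log(1/(\Delta\epsilon)))$, while the $O(\log(1/(\epsilon\Delta))/\sqrt\Delta)$ scaling correctly describes the maximal evolution time $t=y_J\sqrt{z_K}$.
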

In Appx.~\ref{appx} we describe a method to simulate the evolution with $\tilde H$.
To this end,
we also assume that there exists a procedure $Q_P$ that computes the locations 
and magnitude of the nonzero entries
of the matrix $P$. More specifically, $Q_P$ performs the map
\begin{align}
Q_P  \ket \sigma  \rightarrow & \ket \sigma \ket{\sigma'_1,\ldots,\sigma'_d} \otimes
\\
\nonumber
&\otimes \ket{\Pr(\sigma|\sigma'_1),\Pr(\sigma'_1|\sigma),\ldots
,\Pr(\sigma|\sigma'_d),\Pr(\sigma'_d|\sigma) } \;,
\end{align}
where $d$ is the sparsity of $P$. The configurations
$\sigma'_i$ are such that $\Pr(\sigma|\sigma'_i),\Pr(\sigma'_i|\sigma)\ne 0$.
We write $C_P$ for the complexity of implementing $Q_P$.
In Appx.~\ref{appx} we describe a decomposition of $\tilde H=\sum_{k=1}^{\tilde K}   \tilde U_k/2$ in terms of $\tilde K=O(d^2)$ unitaries,
so that we can use the results of~\cite{BCC+15} to simulate $\tilde W (t)=\exp(-i \tilde H t)$.
Each $\tilde U_k$ can be implemented with $O(1)$ uses of $Q_\cU$ and $Q_P$,
and $O(d \log (N))$ additional gates. 
Using the results of~\cite{BCC+15} and Sec.~\ref{sec:maintools}, the complexity for simulating $\tilde W(t)$ within precision $\epsilon'$,
obtained in Eq.~\eqref{eq:finalWcomp}, is
\begin{align}
C_W(t,\epsilon')= O \left( (d \log (N) + C_P+C_\cU) \frac{\tau \log(\tau/\epsilon')}{  \log\log(\tau/\epsilon')}\right) \;, \label{eq:estHcomplexty1}
\end{align}
where $\tau = |t| d^2$. Note that $C_W(t,\epsilon')$ is almost linear in $|t|$ and polynomial in $d$, and the
dependence on $d$ may be improved by using the results in~\cite{BCK15}.
Then, assuming access to $Q_P$, we obtain:
\begin{corollary}
There exists a quantum algorithm to estimate $t_h$ within precision $\epsilon$
and constant confidence level
that implements a unitary $\bar V$ of gate complexity
\begin{align}
 \tilde O \left(   \frac 1 {\epsilon \Delta} \left( \frac{d^2}{\sqrt \Delta} (d \log (N) + C_P +C_\cU)  + C_\spi \right) \right) \;.
\end{align}
The $\tilde O$ notation hides factors that are polylogarithmic in $d/(\epsilon \Delta)$.
\end{corollary}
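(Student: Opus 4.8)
The plan is to combine Theorem~\ref{thm:main2} with the explicit Hamiltonian-simulation cost of \eq{estHcomplexty1} by substituting the stated parameter choices and simplifying. First I would recall from Theorem~\ref{thm:main2} that the gate complexity is
\begin{align}
\nonumber
O \left( \frac 1 {\epsilon'} \left( C_W(t,\epsilon') + C_\cU + C_\spi + C_B \right) \right) \;,
\end{align}
with $\epsilon'=O(\epsilon\Delta/\log(1/(\epsilon \Delta)))$, $t=O(\log(1/(\epsilon\Delta))/\sqrt{\Delta})$, and $C_B=O(\log(1/(\Delta\epsilon)))$. The strategy is to plug \eq{estHcomplexty1} in for $C_W(t,\epsilon')$, noting that in the sparse setting the relevant rescaled time is $\tau=|t|d^2=O(d^2\log(1/(\epsilon\Delta))/\sqrt{\Delta})$.

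The next step is a bookkeeping computation of the logarithmic factor $\tau\log(\tau/\epsilon')/\log\log(\tau/\epsilon')$. Since $\tau/\epsilon'$ is polynomial in $d/(\epsilon\Delta)$ (as $\epsilon'$ is itself polynomially related to $\epsilon\Delta$ up to a log), $\log(\tau/\epsilon')$ is $O(\operatorname{polylog}(d/(\epsilon\Delta)))$ and gets absorbed into the $\tilde O$ notation; the leading behavior of $C_W(t,\epsilon')$ is thus $\tilde O\big((d\log N + C_P + C_\cU)\,d^2/\sqrt{\Delta}\big)$. Then I would multiply by $1/\epsilon'$, again absorbing the $1/\log(1/(\epsilon\Delta))$ factor appearing in $\epsilon'$ into the $\tilde O$, so $1/\epsilon'=\tilde O(1/(\epsilon\Delta))$. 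Carrying $C_\spi$ and $C_B$ along, one sees $C_B$ is dominated by the other logarithmic contributions and can be dropped inside $\tilde O$; the $C_\cU$ term already appears (more favorably) inside the $C_W$ contribution. Collecting terms gives
\begin{align}
\nonumber
\tilde O \left( \frac 1 {\epsilon\Delta} \left( \frac{d^2}{\sqrt{\Delta}}\big(d\log N + C_P + C_\cU\big) + C_\spi \right) \right) \;,
\end{align}
which is the claimed bound.

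The routine but slightly delicate point — the part I would be most careful about — is verifying that every factor I am sweeping into $\tilde O$ really is polylogarithmic in $d/(\epsilon\Delta)$ and not, say, polylogarithmic in $N$ or polynomial in some hidden quantity. In particular I would check that $\log N$ is kept \emph{explicit} (it multiplies $d^2/\sqrt\Delta$ and is not hidden), that $1/\Delta \ge t_h \ge 1$ so that $\log(1/(\epsilon\Delta))$ behaves monotonically as a ``large parameter,'' and that the interplay $\epsilon' = O(\epsilon\Delta/\log(1/(\epsilon\Delta)))$ only costs an extra polylog factor when inverted. Everything else is direct substitution into Theorem~\ref{thm:main2} and \eq{estHcomplexty1}, so no new ideas are needed; the corollary is essentially a restatement of the theorem specialized to the sparse oracle model of Appx.~\ref{appx}.
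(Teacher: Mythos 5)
Your proposal is correct and follows essentially the same route as the paper: substitute $\epsilon'=O(\epsilon\Delta/\log(1/(\epsilon\Delta)))$ and $t=O(\log(1/(\epsilon\Delta))/\sqrt\Delta)$ into Eq.~\eqref{eq:estHcomplexty1} and Theorem~\ref{thm:main2}, absorb the logarithmic factors (including $C_B$ and the $\log$ hidden in $1/\epsilon'$) into the $\tilde O$ notation, and observe that the separate $C_\cU$ term is dominated by its appearance inside the $C_W$ contribution. Your care in noting that $\log N$ is kept explicit and is not swept into the $\tilde O(\cdot)$ matches the paper's convention, which hides only factors polylogarithmic in $d/(\epsilon\Delta)$.
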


The dominant scaling of the complexity in terms of $\Delta$ and $\epsilon$
is then $\tilde O(1/(\epsilon \Delta^{3/2}))$, which is a quadratic improvement over the classical complexity
obtained above.


\vspace{0.5cm}

Our quantum algorithm uses the three techniques
described in Sec.~\ref{sec:maintools} and uses some other
results in~\cite{CKS15}. In fact,
since $H >0$, we can improve some results in~\cite{CKS15}
that regard the decomposition of the inverse of a matrix as a linear combination of unitaries.
That is, for a positive matrix $H$, we can use the identity
\begin{align}
\frac 1 H & =\frac 1 2  \int_0^\infty d\beta \; e^{-\beta H/2} \;,
\end{align}
and use the Hubbard-Stratonovich transformation of Eq.~\eqref{eq:HS2} to simulate $e^{-\beta H/2}$.
Roughly, $1/H$ can be simulated  by a linear combination of unitaries, 
each corresponding to an evolution with $\tilde{H}$ for time $y \sqrt{ \beta}$.
Since $\beta \in[0,\infty)$ and $y \in (-\infty,\infty)$, we will need to find an approximation by a finite, discrete sum of operators $e^{-i y_j \sqrt{\beta_k} \tilde H }$.
We obtain:
\begin{lemma}
\label{lem:2algapprox}
Let 
\begin{equation}
X' = \frac 1 {\sqrt{2\pi}} \sum_{k=0}^K {\delta z } \sum_{j=-J}^J \delta y \; e^{-y_j^2/2} e^{-i y_j \sqrt{z_k} \tilde H } \;,
\end{equation}
where $y_j= j \delta y$, $z_k = k \delta z$, and $\| H \| \ge \Delta >0$. Then, there exists
 $J= \Theta ( \sqrt{1/\Delta}\log^{3/2}(1/(\Delta \epsilon)) )$,  $K = \Theta ((1/\Delta) \log(1/(\Delta \epsilon))/\epsilon)$,
 $\delta y= \Theta ( \sqrt{\Delta}/   \log(1/(\Delta \epsilon)))$, and $\delta z=\Theta(\epsilon)$
 such that
%
%
\begin{align}
\left \| \left (\frac 1 H \ket \phi \right) \otimes \ket 0_{\rm a_1} - X' \ket \phi  \ket 0_{\rm a_1} \right \| \le \epsilon/2 
\end{align}
for all states $\ket \phi$. \label{lem:2nddecomp}
\end{lemma}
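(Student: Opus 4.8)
The plan is to reduce the proof to two independent approximation problems and then combine them with a triangle inequality, exactly mirroring the structure of Lemma~\ref{lem:firstdecomp}. First I would split the total error into a \emph{truncation-of-the-$\beta$-integral} part and a \emph{discretization} part. Writing $1/H = \tfrac12\int_0^\infty d\beta\, e^{-\beta H/2}$, I would first replace the upper limit $\infty$ by some finite $Z$ and the integral by a Riemann sum over $z_k = k\,\delta z$. Since $\|H\|\ge\Delta$, every eigenvalue $x$ of $H$ satisfies $x\ge\Delta$, so $e^{-zx/2}$ decays at rate at least $\Delta/2$; hence truncating at $Z = \Theta((1/\Delta)\log(1/(\Delta\epsilon)))$ loses only $O(\epsilon/\Delta \cdot \Delta) = O(\epsilon)$ relative to $1/x \le 1/\Delta$, and a Riemann sum with spacing $\delta z = \Theta(\epsilon)$ introduces error $O(\delta z \cdot Z \cdot \sup|\tfrac{d}{d\beta}e^{-\beta x/2}|)$, which is $O(\epsilon)$ after absorbing logs. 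This fixes $K = Z/\delta z = \Theta((1/\Delta)\log(1/(\Delta\epsilon))/\epsilon)$.

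Next, for each fixed $z_k$ I would apply Lemma~\ref{lem:firstdecomp} (really its proof) verbatim to approximate $e^{-z_k H/2}$ by the discretized Hubbard--Stratonovich sum $\tfrac{1}{\sqrt{2\pi}}\sum_{j=-J}^J \delta y\, e^{-y_j^2/2} e^{-i y_j\sqrt{z_k}\tilde H}$, acting on states of the form $\ket\phi\ket0_{\rm a_1}$ so that $(\tilde H)^2$ acts as $H$. The subtlety is that $J$ and $\delta y$ in Lemma~\ref{lem:firstdecomp} depend on the ``$\beta$'' being used and on $\|H\|$; here I must choose a single $J,\delta y$ that works \emph{uniformly} in $k$, i.e.\ for the largest relevant time $y_J\sqrt{z_k}$ with $z_k \le Z = \Theta((1/\Delta)\log(1/(\Delta\epsilon)))$. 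Plugging $\beta \mapsto Z$ into Eqs.~\eqref{eq:deltay}--\eqref{eq:maxJ} with $\|\tilde H\| = O(1)$ (since $\|H\|\le\|\bar H\| = O(1)$ for a discriminant-type Hamiltonian) gives $\delta y = \Theta(1/(\sqrt{Z}\log(1/\epsilon))) = \Theta(\sqrt\Delta/\log(1/(\Delta\epsilon)))$ and $J = \Theta(\sqrt Z\,\log(1/\epsilon)) = \Theta(\sqrt{1/\Delta}\,\log^{3/2}(1/(\Delta\epsilon)))$, matching the claimed parameters. Each such per-$z_k$ approximation has error $\le \epsilon'/2$ for a suitable per-term precision; since the outer sum has coefficients $\tfrac12\delta z$ summing to $\tfrac12 Z = O((1/\Delta)\log(1/(\Delta\epsilon)))$ and it multiplies operators of norm $O(1)$ acting on $\ket\phi$, I need the per-$z_k$ error to be $O(\epsilon\Delta/\log(1/(\Delta\epsilon)))$, which only costs extra logarithmic factors inside $J$ — these are absorbed by the stated $\Theta(\cdot)$.

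Finally I would assemble: by the triangle inequality the total deviation of $X'$ from $(1/H)\ket\phi\otimes\ket0_{\rm a_1}$ is bounded by (truncation error) $+$ (Riemann error) $+$ $\sum_k \tfrac12\delta z\cdot(\text{per-}z_k\text{ error})\cdot\|\ket\phi\|$, each piece $\le \epsilon/6$ say, giving $\le \epsilon/2$. The main obstacle I expect is bookkeeping the \emph{uniformity} of the discretization: I must verify that a single pair $(J,\delta y)$ simultaneously controls $|f(\tilde x) - e^{-\omega_0^2/2}|$ and the tail $\sum_{|j|>J}$ for \emph{every} $z_k \in [0,Z]$, and that the worst case is indeed $z_k = Z$ (monotonicity of the required $\delta y, y_J$ in the effective evolution time) — but since the bounds in Lemma~\ref{lem:firstdecomp} are monotone in the product $\|\tilde H\|\sqrt{\beta}$, this follows by applying them at $\beta = Z$. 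A secondary point is checking that $\|H\|\ge\Delta$ is exactly the hypothesis that makes the $\beta$-truncation work, which it is because $1/x$ is then bounded on the spectrum and $e^{-Zx/2}\le e^{-Z\Delta/2}$.
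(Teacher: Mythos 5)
Your overall structure — the Laplace representation $1/H = \int_0^\infty dz\, e^{-zH}$ (up to a constant), truncation plus discretization of the $z$-integral, per-$z_k$ invocation of the proof of Lemma~\ref{lem:firstdecomp} with a single pair $(J,\delta y)$ chosen at the worst case $z_k = z_K$, and a final triangle inequality — is exactly the structure of the paper's proof, and your derivations of $Z$, $K$, $\delta y$, $J$ all reproduce the claimed parameters. Your uniformity remark (monotonicity of Lemma~\ref{lem:firstdecomp}'s bounds in the effective evolution time $\|\tilde H\|\sqrt{\beta}$) is correct and is the implicit content of the phrase ``replace $\kappa$ by $1/\Delta$'' in the paper.

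There is, however, a genuine gap in the Riemann-sum error estimate. You bound the discretization error by $O(\delta z \cdot Z \cdot \sup_\beta|\partial_\beta e^{-\beta x/2}|)$. With $\delta z = \Theta(\epsilon)$, $Z = \Theta((1/\Delta)\log(1/(\Delta\epsilon)))$, and $\sup|\partial_\beta e^{-\beta x/2}| = x/2 = O(1)$, this gives $O(\epsilon\,(1/\Delta)\log(1/(\Delta\epsilon)))$, which is \emph{not} $O(\epsilon)$ ``after absorbing logs'' — the $1/\Delta$ is a polynomial factor, not a logarithm. If you kept this naive bound you would be forced to shrink $\delta z$ to $\Theta(\epsilon\Delta/\log(1/(\Delta\epsilon)))$, which would inflate $K = Z/\delta z$ to $\Theta((1/\Delta^2)\log^2(1/(\Delta\epsilon))/\epsilon)$, contradicting the stated $K$. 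The missing observation is that the integrand decays, so the Riemann error does not accumulate over all $K$ intervals: for a decreasing $g$ with $g(0)=1$, the left-endpoint rule error telescopes, $\sum_{k}\delta z\,g(z_k) - \int_0^{z_K}g \le \delta z\bigl(g(0)-g(z_K)\bigr) \le \delta z$, so the discretization error is $O(\delta z)$ \emph{independently of $Z$}. The paper reaches the same conclusion by summing the geometric series exactly, $\delta z\sum_{k=0}^{K-1}e^{-z_k x} = \delta z\,\frac{1-e^{-z_K x}}{1-e^{-\delta z x}}$, and expanding to get error $\frac{1}{x}\,O(\delta z\,x + e^{-z_K x}) = O(\delta z + e^{-z_K x}/x)$. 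Either of these sharper arguments justifies $\delta z = \Theta(\epsilon)$ and the claimed $K$; your version as written does not, so you must replace the global-$\sup$ Riemann bound with one of these.
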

\begin{proof}
We first consider the approximation of $1/x$ by a finite sum of $e^{-z_k H}$:
\begin{align}
\left| \frac 1 x - \delta z \sum_{k=0}^{K-1} e^{-z_k x} \right| &= \left| \frac 1 x - \delta z \frac {1- e^{-z_K x}}{1-e^{-\delta z x}} \right| \\
& = \left| \frac 1 x O(\delta z x + e^{-z_K x})\right| 
\end{align}
Assuming that $1\geq x\geq 1/\kappa$ so that $1/x \le \kappa$, we can upper bound the above 
quantity by $\epsilon/4$ if we choose $e^{-z_K /\kappa} = \Theta(\epsilon/\kappa)$ and $\delta z  = \Theta(\epsilon)$.
These imply
\begin{align}
 z_K = \Theta(\kappa \log(\kappa/\epsilon))
\end{align}
and
\begin{align}
K = z_K /\delta z= \Theta (\kappa \log(\kappa/\epsilon)/\epsilon) \;.
\end{align}

In the next step, we invoke the proof of Lemma~\ref{lem:firstdecomp} and approximate each $e^{-z_k x}$ as 
\begin{equation}
g_k(x) = \frac{1}{\sqrt{2\pi}}\sum_{j=-J}^J \delta y \; e^{-y_j^2/2} e^{-i y_j \sqrt{ z_k x}} \;,
\end{equation}
and we need to choose $J$ and $\delta y$ so that the approximation error is bounded by $\epsilon/(4z_K)$. 
Then,
\begin{align}
\nonumber
\left| \delta z \sum_{k=0}^{K-1} (e^{-z_k x} -g_k(x))  \right| &\le (\delta z K) \frac {\epsilon}{4 z_K}\\
& \le \epsilon/4 \;.
\end{align}
Lemma~\ref{lem:firstdecomp} then implies
\begin{align}
\nonumber
\delta y &= \Theta \left( \frac 1 { \sqrt{z_K  \log(z_K/\epsilon)}}\right) \\
\nonumber
& = \Theta \left( \frac 1 { \sqrt{\kappa \log(\kappa/\epsilon)  \log(\kappa \log(\kappa/\epsilon)/\epsilon)}}\right)
\\
& = \Theta \left( \frac 1 { \sqrt{\kappa} \log(\kappa/\epsilon)}  \right) \;,
\end{align}
and
\begin{align}
\nonumber
J &= \Theta\left( \sqrt{z_K} \log(z_K/\epsilon) \right) \\
& =  \Theta \left( \sqrt{\kappa} \log^{3/2}(\kappa/\epsilon) \right) \;.
\end{align}

Thus far, we presented an approximation of $1/x$, for $1 \ge x \ge 1/\kappa$, as a doubly weighted sum of 
terms $\exp(-iy_j\sqrt{z_k x})$. To obtain the desired result, it suffices to act with $X'$ on any eigenstate
of $\tilde H$ and replace $\sqrt x$ by the corresponding eigenvalue, as we did in Lemma~\ref{lem:firstdecomp}. Since $H \ge \Delta$,
we need to replace $\kappa$ by $1/\Delta$ in the bounds obtained for $\delta z$, $\delta y$, $K$, $J$, $z_K$, and $y_J$.
\end{proof}
In general, we cannot implement the unitaries $e^{-i y_j \sqrt{2 z_k} \tilde H }$ exactly but we can do so
up to an approximation error. We  obtain:
\begin{corollary}
Let $\epsilon >0$ and $W_{jk}$ be a unitary that satisfies
\begin{align}
\label{eq:Hamsimapprox2}
\| W_{jk} - e^{-i y_j \sqrt{2 z_k} \tilde H } \| \le \frac{\epsilon}{4 z_K} \;.
\end{align}
Let 
\begin{align}
\label{eq:Xdef2}
X= \frac 1 {\sqrt{2\pi}} \sum_{k=0}^K \delta z \sum_{j=-J}^J \delta y \; e^{-y_j^2/2} W_{jk} \;.
\end{align}
Then,
\begin{align}
\label{eq:expapprox5}
\left \| \left (\frac 1 H \ket \phi \right) \otimes \ket 0_{\rm a_1} - X \ket \phi  \ket 0_{\rm a_1} \right \| \le \epsilon \;.
\end{align}
\end{corollary}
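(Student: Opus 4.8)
The plan is to reproduce, for the doubly-indexed sum $X$, exactly the argument used for the (unnamed) corollary that follows Lemma~\ref{lem:firstdecomp}. First I would compare $X$ with the operator $X'$ of Lemma~\ref{lem:2nddecomp}: by construction $X$ is obtained from $X'$ by substituting, term by term, the approximate unitary $W_{jk}$ for the exact evolution, keeping the same nonnegative weight $\frac{1}{\sqrt{2\pi}}\,\delta z\,\delta y\,e^{-y_j^2/2}$ (the evident $\sqrt 2$ discrepancy between the exponent in Lemma~\ref{lem:2nddecomp} and Eq.~\eqref{eq:Hamsimapprox2} is a harmless relabelling of the evolution times and does not affect anything below). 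The triangle inequality together with Eq.~\eqref{eq:Hamsimapprox2} then gives
\begin{align}
\nonumber
\| X - X' \| &\le \frac{\delta z}{\sqrt{2\pi}} \sum_{k=0}^K \sum_{j=-J}^J \delta y\; e^{-y_j^2/2}\, \bigl\| W_{jk} - e^{-i y_j \sqrt{2 z_k}\tilde H} \bigr\| \\
&\le \frac{\epsilon}{4 z_K}\cdot \frac{\delta z}{\sqrt{2\pi}} \sum_{k=0}^K \sum_{j=-J}^J \delta y\; e^{-y_j^2/2} \;,
\end{align}
so the whole estimate reduces to controlling the $\ell_1$-norm of the coefficients of $X$.

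For that, the inner sum over $j$ is the Gaussian sum already analyzed in the proof of Lemma~\ref{lem:firstdecomp} (here with the spectral parameter $\kappa=1/\Delta$), so, exactly as in the earlier corollary, $\frac{1}{\sqrt{2\pi}}\sum_{j=-J}^J \delta y\, e^{-y_j^2/2}\le 5/4$; since this is independent of $k$, it pulls out of the $k$-sum and leaves $\delta z\,(K+1)=z_K+\delta z$, which is $(1+o(1))\,z_K$ because $\delta z=\Theta(\epsilon)$ while $z_K=\Theta(\Delta^{-1}\log(1/(\Delta\epsilon)))$. Hence $\| X-X'\|\le \tfrac{5}{16}(1+o(1))\epsilon\le \epsilon/2$ in the regime of interest, and combining this with the bound $\| (1/H)\ket\phi\ket0_{\rm a_1}-X'\ket\phi\ket0_{\rm a_1}\|\le \epsilon/2$ supplied by Lemma~\ref{lem:2nddecomp} through one more application of the triangle inequality yields Eq.~\eqref{eq:expapprox5} for every state $\ket\phi$.

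The main point to get right is the bookkeeping of the accumulated simulation error across the double sum: the coefficient $\ell_1$-norm grows like $\Theta(z_K)$ because of the $k$-sum, so the per-term tolerance in Eq.~\eqref{eq:Hamsimapprox2} must carry the compensating factor $1/z_K$ — which it does by design — and one then has to check that the leftover constant $5/16$ stays safely below $1/2$, with the $o(1)$ correction coming from $\delta z/z_K$ being negligible. The only other mild subtlety is re-confirming that the hypotheses of Lemma~\ref{lem:firstdecomp} are genuinely met by the specific $J$ and $\delta y$ listed in Lemma~\ref{lem:2nddecomp}, so that the $5/4$ bound on the inner Gaussian sum is legitimate; this is immediate from the parameter choices in that statement. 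No step is conceptually hard — it is essentially the same two-line argument as the first corollary, with one extra summation index to account for.
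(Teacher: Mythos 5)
Your argument is correct and follows the same route as the paper: bound $\|X-X'\|$ by the triangle inequality over the double sum, observe that the coefficient $\ell_1$-norm is $\approx z_K$ (cancelling the $1/z_K$ in the per-term tolerance), and combine with Lemma~\ref{lem:2nddecomp}. The only cosmetic difference is that you bound the inner Gaussian sum by $5/4$ via the first corollary's estimate, whereas the paper invokes its Eq.~\eqref{eq:sumestimate} to bound it by $1+\epsilon/4$; both yield the needed $\|X-X'\|\le\epsilon/2$, and you also correctly flag the $\sqrt{2}$ notational mismatch between Lemma~\ref{lem:2nddecomp} and Eq.~\eqref{eq:Hamsimapprox2} as immaterial.
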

\begin{proof}
If we replace each $e^{-i y_j \sqrt{z_k x}}$ by a 
term that is an $\epsilon/(2z_K)$ approximation in the definition of $g_k(x)$, it yields an approximation
of $1/x$ within precision $\epsilon/2$  plus
\begin{align}
\frac{\delta z K }{\sqrt{2\pi}} \sum_{j=-J}^J \delta y e^{-y_j^2/2}( \epsilon/(4z_K) ) =\frac{ \epsilon}{4\sqrt{2\pi}} \sum_{j=-J}^J \delta y e^{-y_j^2/2}\;.
\end{align}
Our choice of parameters in Lemma~\ref{lem:2algapprox} implies
\begin{align}
\label{eq:sumestimate}
\left| 1- \frac{ 1}{\sqrt{2\pi}} \sum_{j=-J}^J \delta y e^{-y_j^2/2}\right| \le \epsilon/4
\end{align}
so that the additional error is bounded by $\epsilon/2$.
The proof follows by replacing $\sqrt x$ by the corresponding eigenvalue of $\tilde H$
and $g_k(x)$ by the linear combination of the $W_{jk}$ with  weights $\delta y e^{-y_j^2/2}$.
\end{proof}

So far we showed that $1/H$ can be approximated within precision $\epsilon$
by a linear combination of unitaries that correspond to evolutions under $\tilde H$
for maximum time $y_J \sqrt{z_K}=\Theta((1/\sqrt \Delta) \log(1/(\Delta \epsilon)))$.
Each such evolution must be implemented by a method for Hamiltonian simulation
that approximates it within precision $\epsilon'=O(\epsilon/z_K)=O(\epsilon \Delta/(\log(1/(\epsilon \Delta))))$.

In Sec.~\ref{sec:maintools} we described a technique to implement $X = \sum_{l=0}^{L-1} \gamma_l V_l$.
In this case, the coefficients and unitaries are $\delta z \delta y e^{-y_j^2/2}/\sqrt{2 \pi}$ and $W_{jk}$,
respectively. From Lemma~\ref{lem:2algapprox} and Eq.~\eqref{eq:sumestimate}, it is simple to show
\begin{align}
\nonumber
\left|z_K - \gamma \right|&\le \sum_{k=0}^K \delta z \left| 1- \frac{1}{\sqrt{2 \pi}} \sum_{j=-J}^{J} \delta y e^{-y_j^2/2}\right| \\
\label{eq:gammaestimate}
& \le z_K \epsilon/4
\end{align}
and thus $\gamma \approx z_K$ or $\gamma = \Theta((1/\Delta) \log(1/(\Delta \epsilon)))$.

Last, we define the unitary $T=(T_2)^\dagger T_1$ such that
\begin{align}
T_1 \ket 0 \ket 0_{\rm a} = \sqrt{\pi_\cU} \frac {X'} \gamma \ket{\sqrt{\pi_\cU}} \ket 0_{\rm a}  + \sket{\Theta^\perp} \;,
\end{align}
and  $\sket{\Theta^\perp}$ is supported in the subspace orthogonal to $\ket 0_{\rm a}$.
The ancillary register $\bf a$ includes the ancillary registers $\rm a_1, \rm a_2, \rm a_3$ as needed
for evolving with $\tilde H$ and implementing $X'$. That is, $\ket 0_{\rm a}=\ket 0_{\rm a_1, \rm a_2, \rm a_3}$.
$T_1$ can be implemented as follows. It first uses $Q_\spi$ to prepare the quantum state $\ket{\sqrt{\pi}}$. It then uses $Q_\cU$
to prepare $\sqrt{\pi_\cU} \ket{\sqrt{\pi_\cU}} \ket 0_{\rm a'} + \sqrt{\pi_\cM} \ket{\sqrt{\pi_\cM}} \ket 1_{\rm a'}$, where the ancilla
qubit $\rm a'$ is part of the register $\bf a$. Then, conditional on $\ket 0_{\rm a'}$,
it implements $X'/\gamma$ as discussed in Sec.~\ref{sec:maintools}. $T_2$ is the unitary that prepares
$\sqrt{\pi_\cU} \ket{\sqrt{\pi_\cU}} \ket 0_{\rm a}  + \sqrt{\pi_\cM} \ket{\sqrt{\pi_\cM}} \sket {0^\perp}_{\rm a}$,
where $\sket {0^\perp}_{\rm a}$ is orthogonal to $\ket {0}_{\rm a}$. Then, if $T=(T_2)^\dagger T_1$, we obtain
\begin{align}
\label{eq:ht9}
\bra 0 \bra 0_{\rm a} T \ket 0 \ket 0_{\rm a} = \frac {\pi_\cU}{\gamma} \bra{\sqrt{\pi_\cU}} X'  \ket{\sqrt{\pi_\cU}} \;.
\end{align}

\subsection{Algorithm}
We set $\epsilon'=O(\epsilon \Delta/\log(1/(\epsilon \Delta)))$.
The quantum algorithm for estimating the hitting time consists of two basic steps.
The first step uses the amplitude estimation algorithm of~\cite{KOS07} to provide an estimate of
$\bra 0 \bra 0_{\rm a_3} T \ket 0 \ket 0_{\rm a_3}$ within precision $\epsilon '$ and constant confidence level ($c \approx 0.81$). Call that estimate $\tilde t_h$.
The output of the algorithm is $\hat t_h =z_K \tilde t_h$.

\subsection{Validity and complexity}
As described, our quantum algorithm provides a $O(\epsilon' z_K)$
estimate of $z_K \bra 0 \bra 0_{\rm a_3} T \ket 0 \ket 0_{\rm a_3}$.
Using Eqs.~\eqref{eq:ht5} and~\eqref{eq:ht9}, the output is an estimate
of $(z_K/\gamma) t_h$ within precision $O((z_K/\gamma) \epsilon + z_K \epsilon')$.
Our choice of $\epsilon'$ implies that this is $O((z_K/\gamma) \epsilon)$. Also, using Eq.~\eqref{eq:gammaestimate},
we obtain
\begin{align}
\left| 1 - \frac {z_K} {\gamma} \right| = O(\epsilon).
\end{align}
Then, our quantum algorithm outputs $\hat t_h$, an estimate of $t_h$
within absolute precision $O(\epsilon)$.

Our quantum algorithm uses $T$, $O(1/\epsilon')$ times. Each $T$ uses $Q_\cU$ and $Q_\spi$ two times,
in addition to the unitaries needed to implement $X'$. Each such unitary requires evolving with $\tilde H$
for maximum time $t=O((1/\sqrt \Delta) \log(1/(\Delta \epsilon)))$. In addition, each such unitary
requires preparing a quantum state proportional to
\begin{align}
\frac 1 {\sqrt {\gamma}} \sum_{j,k}\left (\frac{\delta y \delta z e^{-y_j^2/2}}{\sqrt{2 \pi}}\right)^{1/2} \ket{j,k} \;.
\end{align}
The gate complexity for preparing this state using the results in~\cite{KW08,Som15} 
is $C_B=O( \log (J)+\log (K) )$ and then $C_B= O(\log(1/(\Delta \epsilon)))$.
The overall gate complexity is
\begin{align}
O \left( \frac 1 {\epsilon'} \left( C_W(t,\epsilon') + C_\cU + C_\spi +C_B\right) \right) \;.
\end{align}
This proves Thm.~\ref{thm:main2}.
Using Eq.~\eqref{eq:finalWcomp} and replacing for $\epsilon'$ and $t$, 
and disregarding terms that are polylogarithmic in $d/(\epsilon \Delta)$, the gate complexity is
\begin{align}
 \tilde O \left(   \frac 1 {\epsilon \Delta} \left( \frac{d^2}{\sqrt \Delta} (d \log (N) + C_P +C_\cU)  + C_\spi  \right) \right) \;.
 \end{align}

\section{Conclusions}
\label{sec:conc}

We provided quantum algorithms for solving two problems of stochastic processes, namely the preparation of a thermal Gibbs state of a quantum system
and the estimation of the hitting time of a Markov chain. Our algorithms combine many techniques, 
including Hamiltonian simulation, spectral gap amplification, and methods for the quantum linear systems algorithm.
They provide significant speedups with respect to known classical and quantum algorithms for these problems
and are expected to be relevant to research areas in statistical physics and computer science, including optimization
and the design of search algorithms.

We first showed that, starting from a completely entangled state, we can prepare a state that is $\epsilon$-close (in trace distance) to a thermal Gibbs state using resources that scale polylogarithmic in $1/\epsilon$. This is an exponential improvement over previously known algorithms that rely on phase estimation and have complexity that depends polynomially in $1/\epsilon$~\cite{PW09,CW10}. Our algorithm circumvents the limitations of phase estimation by approximating the exponential operator as a finite linear combination of unitary operations and using techniques developed in~\cite{BCC+15} to implement it. We also used techniques developed in the context of spectral gap amplification~\cite{SB13} to improve the complexity
dependence on the
inverse temperature, from almost linear in $\beta$ to almost linear in $\sqrt \beta$.

Next, we presented a quantum algorithm to estimate the hitting time of a Markov chain, initialized in its stationary distribution, with almost quadratically less resources in all parameters than a classical algorithm (in a worst-case scenario). 
This is done by first expressing the hitting time as the expectation value of the inverse of an operator $H$, which is obtained by a simple 
transformation of the Markov chain stochastic matrix. We then used results from~\cite{CKS15} to apply $1/H$; in this particular case,  $H$ is positive and we showed that the implementation of $1/H$ can be done more efficiently than the algorithm in~\cite{CKS15}, in terms of the condition number of $H$. 
Such an expected value can be computed using methods for amplitude estimation. 
For constant confidence level ($c \approx 0.81$), 
the use of amplitude estimation limits us to a complexity dependence that is $\tilde O(1/\epsilon)$, where $\epsilon$ is the absolute precision of our estimate.  
It is possible to increase the confidence level towards $c$ with an increase in complexity that is $O(\log(|1-c|))$.


\section{Acknowledgements}
AC was supported by a Google research award.
RS acknowledges support of the
LDRD program at LANL and NRO. We thank Sergio Boixo and Ryan Babbush at Google Quantum A.I. Lab Team
for discussions.



\begin{appendix}
\section{Simulation of $\tilde H$}
\label{appx}
We provide a method to simulate $\tilde H$ in 
time polylogarithmic in $1/\epsilon$, as required by the algorithm for estimating hitting times
of Sec.~\ref{sec:HT}.
We assume that there exists a procedure $Q_P$ that computes the locations 
and magnitude of the nonzero entries
of the matrix $P$. More precisely, $Q_P$ performs the map
\begin{align}
Q_P \ket \sigma &= \ket \sigma \ket{\sigma'_1,\ldots,\sigma'_d} \otimes
\\
\nonumber
&\otimes \ket{\Pr(\sigma|\sigma'_1),\Pr(\sigma'_1|\sigma),\ldots
,\Pr(\sigma|\sigma'_d),\Pr(\sigma'_d|\sigma) } \;,
\end{align}
where $d$ is the sparsity of $P$, i.e., the largest number of nonzero
matrix elements per row or column. The transition probabilities
are assumed to be exactly represented by a constant number of bits
and we disregard any rounding-off errors.
We also assume access to the oracle $Q_\cU$ such that
\begin{align}
Q_\cU \ket \sigma = - \ket \sigma \ {\rm if} \ \sigma \in \cU
\end{align}
and $Q_\cU \ket \sigma=\ket \sigma$ otherwise.

The Hamiltonians $\bar H$ and $H$ can be constructed as follows. 
For each pair $(\sigma,\sigma')$, such that $\sigma \ne \sigma'$ and $\Pr(\sigma|\sigma') \ne 0$, we define an unnormalized state
\begin{align}
\ket{\mu_{\sigma,\sigma'}}= \frac 1 {\sqrt 2}(\sqrt{\Pr(\sigma|\sigma')} \ket{\sigma'} - \sqrt{\Pr(\sigma'|\sigma)} \ket{\sigma}) \;.
\end{align}
Then, if $\sigma_f \ne \sigma_0$,
\begin{align}
\bra{\sigma_f} \sum_{\sigma,\sigma'} \ket{\mu_{\sigma,\sigma'}}\bra{\mu_{\sigma,\sigma'}} {\sigma_0}\rangle = - \sqrt{\Pr(\sigma_f|\sigma_0)\Pr(\sigma_0|\sigma_f)} \;,
\end{align}
and if $\sigma_f=\sigma_0$,
\begin{align}
\nonumber
\bra{\sigma_0} \sum_{\sigma,\sigma'} \ket{\mu_{\sigma,\sigma'}}\bra{\mu_{\sigma,\sigma'}} {\sigma_0} \rangle&= \sum_{\sigma' \ne \sigma}\Pr(\sigma'|\sigma) \\
& = 1- \Pr(\sigma|\sigma) \;.
\end{align}
These are the same matrix entries of $\bar H$ and the implication is that
\begin{align}
\label{eq:barHsum}
\bar H= \sum_{\sigma,\sigma'} \ket{\mu_{\sigma,\sigma'}}\bra{\mu_{\sigma,\sigma'}} \;.
\end{align}
This is the desired representation of the $\bar H$ as a sum of positive operators. In particular, we can normalize the states
and define
\begin{align}
& \ket{\bar \mu_{\sigma,\sigma'}}=\frac{\ket{\mu_{\sigma,\sigma'}}}{\|\ket{\mu_{\sigma,\sigma'}}\|} \; ,
\\
&\sqrt{\bar \alpha_{\sigma,\sigma'}}=\|\ket{\mu_{\sigma,\sigma'}}\| =\sqrt{\frac {\Pr(\sigma|\sigma')+ \Pr(\sigma'|\sigma)}2}  \;.
\end{align}
Then,
\begin{align}
\bar H = \sum_{\sigma,\sigma'} \bar \alpha_{\sigma,\sigma'} \ket{\bar \mu_{\sigma,\sigma'}} \bra{\bar \mu_{\sigma,\sigma'}} \;.
\end{align}
We let $\Pi_\cU$ be the projector into the subspace $\cU$.
The Hamiltonian is $H= \Pi_\cU \bar H \Pi_\cU$, and using Eq.~\eqref{eq:barHsum}, we obtain
\begin{align}
\nonumber
 H = &  \sum_{\sigma,\sigma' \in \cU} \bar \alpha_{\sigma,\sigma'} \ket{\bar \mu_{\sigma,\sigma'}} \! \bra{\bar \mu_{\sigma,\sigma'}} + \\
 \label{eq:barHproj}
 &+ \! \! \sum_{\sigma' \in \cU} \left(\sum_{\sigma \in \cM}
\! \! \! \Pr(\sigma|\sigma') \right)\ket{\sigma'}\! \bra {\sigma'} ,
\end{align}
which is the desired decomposition as a linear combination of rank-1 projectors.

To build $\tilde H$, we need to take square roots of the projectors. In principle, the dimension  $N_\cU$ is large and we 
want to avoid a presentation of $\tilde H$ as a sum of polynomially many terms in $N_\cU$.
We are also interested in a decomposition of $\tilde H$ in terms of simple unitary operations
so that we can use the results of~\cite{BCC+15} to devise a method to simulate $\exp(-i \tilde H t)$.
We begin with the second term in the right hand side of Eq.~\eqref{eq:barHproj}. Its square root is
\begin{align}
\sum_{\sigma' \in \cU} \left( \sqrt{\sum_{\sigma \in \cM} \Pr(\sigma|\sigma')} \right) \ket {\sigma'} \! \bra{\sigma'} \; .
\end{align}
This term can be simply obtained as a sum of two diagonal unitary operations:
\begin{align}
\frac 1 2 (U_D + U_D^\dagger) \;.
\end{align}
$U_D$ applies a phase to the state $\ket{\sigma'}$ as 
\begin{align}
U_D \ket{ \sigma ' }= e^{i \theta _{\sigma'}} \ket {\sigma '}
\end{align}
with
\begin{align}
\cos (\theta_{\sigma'}) = \sqrt{\sum_{\sigma \in \cM} \Pr(\sigma|\sigma')} \; ,
\end{align}
if $\sigma' \in \cU$. Otherwise, $U_D \ket {\sigma'} = i\ket {\sigma'} $.
$U_D$  can then be implemented by first using $Q_\cU$
to detect if $\sigma'$ is in $\cU$ or not. It next applies $Q_P$
and computes $\theta_{\sigma'}$ in an additional register. Conditional
on the value of $\theta_{\sigma'}$, it applies the corresponding phase
to $\ket{\sigma'}$. It then applies the inverse of $Q_P$ to undo the computation.
That is, $U_D$ requires $O(1)$ uses of $Q_\cU$ and $Q_P$,
and the additional gate complexity is $ O(d)$ due to the computation of $\theta_{\sigma'}$.

The first term in the right hand side of Eq.~\eqref{eq:barHproj}
can be written as a sum of $K'=O(d^2)$ terms as follows.
Using $Q_P$ we can implement a coloring of the graph $G$ with vertex set $V(G) = \{ \sigma : \sigma \in \cU\}$
and edge set $E(G)=\{ (\sigma,\sigma'): \sigma,\sigma' \in U , \Pr(\sigma|\sigma')\ne 0\}$.
We can use the same coloring as that described in~\cite{BCC+14}, which uses a bipartite graph coloring
and was used for Hamiltonian simulation.
Each of the $K'$ terms corresponds to one color and is then a sum of commuting rank-1 projectors.
That is, the first term in the right hand side of Eq.~\eqref{eq:barHproj} is $ \sum_{k=1}^{K'} h_k$ and
\begin{align}
\label{eq:coloring1}
 h_k = \sum_{\sigma,\sigma' \in c_k} \bar \alpha_{\sigma,\sigma'} \ket{\bar \mu_{\sigma,\sigma'}} \bra{\bar \mu_{\sigma,\sigma'}}\;,
\end{align}
where $c_k$ are those elements of $E(G)$ associated with  the $k$-th color.
By the definition of coloring, each rank-1 projector in Eq.~\eqref{eq:coloring1} is orthogonal and commutes
with each other, and then
\begin{align}
\sqrt{ h_k }= \sum_{\sigma,\sigma' \in c_k}\sqrt{ \bar \alpha_{\sigma,\sigma'} }\ket{\bar \mu_{\sigma,\sigma'}} \bra{\bar \mu_{\sigma,\sigma'}} \;.
\end{align}
We can write
\begin{align}
\sqrt{ h_k} = \frac{-iZ_k + iZ_k^\dagger}{2} \;,
\end{align}
where $Z_k$ is the unitary
\begin{align}
Z_k= \exp \left( i \sum_{\sigma,\sigma' \in c_k} \delta_{\sigma,\sigma'} \ket{\bar \mu_{\sigma,\sigma'}} \bra{\bar \mu_{\sigma,\sigma'}}  \right) \;.
\end{align}
The coefficients are chosen so that
\begin{align}
\sin( \delta_{\sigma,\sigma'} )= \sqrt{ \bar \alpha_{\sigma,\sigma'} } \;,
\end{align}
and $0 \le \bar \alpha_{\sigma,\sigma'} \le 1$.

We can simulate each $Z_k$ as follows. Note that
\begin{align}
\label{eq:Zk}
Z_k \ket \sigma =  \xi_{\sigma,\sigma'} \ket \sigma +\xi'_{\sigma,\sigma'} \ket{\sigma'}
\end{align}
where $\sigma'$ is such that $(\sigma,\sigma') \in c_k$. The complex coefficients 
$ \xi_{\sigma,\sigma'}$ and $ \xi'_{\sigma,\sigma'}$ can be simply obtained from the
$\delta_{\sigma,\sigma'}$, and depend only on $\Pr(\sigma|\sigma')$ and  $\Pr(\sigma'|\sigma)$.
Then, on input $\ket \sigma$, we first use $Q_\cU$ to decide whether $\ket \sigma \in \cU$  or not.
 We then apply $Q_P$
once and look for $\sigma '$ such that $(\sigma,\sigma') \in c_k$. We use an additional register
to write a classical description of a quantum circuit that implements the transformation
in Eq.~\eqref{eq:Zk}. 
%
We apply the inverse of $Q_P$ and $Q_\cU$ and only keep the last register.
This is sufficient information to apply the map in Eq.~\eqref{eq:Zk}.
We can then erase all the additional registers by applying the inverse of the operation
that computed  the quantum circuit.
This works because the quantum circuit is invariant under the permutation of $\sigma$ and $\sigma'$.
To implement $Z_k$ we need to use $Q_\cU$ and $Q_P$, $O(1)$ times. The additional gate complexity is $ O(d \log (N))$
for searching for $\sigma'$ and describing the quantum circuit.

In summary, we found a decomposition of $\tilde H$ as
\begin{align}
\tilde H =  \frac 1 2 \sum_{k=1}^{\tilde K}  \tilde U_k
\end{align}
where $\tilde U_k$ are unitaries. The number of terms is $\tilde K = O(d^2)$.
Using Eq.~\eqref{eq:unitdecomp} and the results above, each $\tilde U_k$ can be implemented with $O(1)$ uses of $Q_\cU$ and $Q_P$,
and at most $ O(d \log (N))$ additional gates. 

Using the results of~\cite{BCC+15} (see Sec.~\ref{sec:maintools}), the complexity for simulating $\exp(-i \tilde H t)$ within precision $\epsilon$
for this case is as follows. The number of uses of $Q_\cU$ and $Q_P$ is
\begin{align}
O \left( \tau \log(\tau/\epsilon) / \log\log(\tau/\epsilon)\right) \;,
\end{align}
where $\tau = |t| d^2$.
The additional gate complexity is
\begin{align}
O \left( d \log (N) \tau \log(\tau/\epsilon) / \log\log(\tau/\epsilon)\right) \;.
\end{align}
If we write $C_\cU$ and $C_P$ for the gate complexities of $Q_\cU$ and $Q_P$, respectively, the overall gate complexity to simulate the 
evolution under $\tilde H$ is
\begin{align}
\label{eq:finalWcomp}
C_W(t,\epsilon) = O \left( (d \log (N) +C_\cU+C_P)\tau \frac{\log(\tau/\epsilon) }{ \log\log(\tau/\epsilon)}\right) \;.
\end{align}

\end{appendix}

\end{document}